\journal{Finite Fields and Their Applications}
\renewcommand{\tilde}[1]{\widetilde{#1}}
\newcommand{\vr}[2]{{\mathbb{F}}_{#1}^{#2}}                   
\newcommand{\bm}[1]{\mbox{\boldmath $ #1$} }         
\newcommand{\Supp}{\textrm{Supp}}
\newcommand{\lm}[1]{{\mathbb F}_{#1}}
\newtheorem{theorem}{Theorem}
\newtheorem{lemma}[theorem]{Lemma}
\newtheorem{proposition}[theorem]{Proposition}
\newdefinition{definition}[theorem]{Definition}
\newdefinition{remark}[theorem]{Remark}
\newdefinition{example}{Example}
\newcommand{\widebar}[1]{\overline{#1}}
\newproof{proof}{Proof}
\begin{document}
\begin{frontmatter}

\title{Feng-Rao decoding of primary codes} 
 
 \author[geilandruano]{Olav Geil}
 \ead[url]{http://people.math.aau.dk/~olav}
 
 \author[matsumoto]{Ryutaroh Matsumoto}
 \ead[url]{http://www.rmatsumoto.org/research.html}
 
 \author[geilandruano]{Diego Ruano}
 \ead[url]{http://people.math.aau.dk/~diego}

 \address[geilandruano]{Department of Mathematical Sciences, 
    Aalborg University, Denmark}
 \address[matsumoto]{Department of Communications and Integrated Systems,
    Tokyo Institute of Technology, Japan}
 \begin{abstract}
We show that the Feng-Rao bound for dual codes
and a similar bound by Andersen and Geil \cite{AG} for primary
codes are consequences of each other. This implies that the Feng-Rao decoding algorithm
can be applied to decode primary codes up to half their designed
minimum distance. The technique
 applies to any linear code for which information on well-behaving
pairs is available. Consequently we are able to decode efficiently a large
class of codes for which no non-trivial decoding algorithm was
previously known. Among those are important families of multivariate polynomial codes. Matsumoto and
Miura in~\cite{MM} (See also~\cite{petertom}) derived from the Feng-Rao bound a bound for primary
one-point algebraic geometric codes and showed how to decode up to
what is guaranteed by their bound. The exposition in~\cite{MM} requires the use
of differentials which was not needed in \cite{AG}. Nevertheless we
demonstrate a very strong connection between Matsumoto and Miura's
bound and 
Andersen and Geil's bound when applied to primary one-point algebraic
geometric codes.
\end{abstract}
\begin{keyword}
 decoding \sep Feng-Rao bound \sep generalized Hamming weight \sep minimum distance \sep order domain \sep
well-behaving pair \MSC[2010] 94B65 \sep 94B35 
\end{keyword}
 
\end{frontmatter}
\thispagestyle{plain}
 
\section{Introduction}\label{secintro}\label{sec1}
Originally the celebrated Feng-Rao bound  was stated \cite{FR24,FR1,FR2} in
the language of 
affine variety codes \cite{lax}. Later
H{\o}holdt, Pellikaan, and van Lint \cite{handbook} introduced the
concept of 
order domains and order functions to facilitate the use of the bound, and for those
structures it was renamed the order bound. The success of the order
bound formulation comes from the fact that order domain codes include
the important family of duals of one-point algebraic geometric 
codes as well as the generalization of such codes to higher
transcendence degree algebraic structures. A completely different
point of view was to
formulate the Feng-Rao bound in the setting of general linear
codes~\cite{early,MM,miura2,miura1,handbook}. In this setting having no supporting algebra, a grading of
${\mathbb{F}}_q^n$ is assumed. This simply corresponds to defining an
indexed basis. The componentwise product then plays the role that should
otherwise be  played by 
multiplication in the supporting algebra. It should be stressed that the linear code formulation
is the most general in the sense that the other formulations mentioned
above can be deduced from that. So results derived in the linear code
description can be easily translated into the situation where we have
some kind of a supporting algebra. This of course in particular holds
for the decoding method to be discussed in the present paper.

The strength of the Feng-Rao bound --- besides the fact that it improves
on previous bounds such as the Goppa bound --- is that it enables an
improved code construction \cite{FR2}. Furthermore, it comes with a decoding
algorithm that applies to any dual code, improved or not. This decoding
algorithm can be formulated in any of the three settings: affine
variety codes, order domain codes, and general linear codes. 

Building on~\cite{normtrace,hyperbolic,shibuya} Andersen and Geil
in~\cite{AG} introduced a bound on the minimum distance of primary
codes. This bound was later slightly generalized and enhanced in~\cite{geithom}, but we
shall refer also this version as Andersen and Geil's bound. The bound has the same flavor as the Feng-Rao
bound. In particular it also enables an
improved code construction. The exposition in~\cite{AG}
starts by treating the general linear code set-up. It is then simply a 
matter of translation to reformulate the bound in the setting of order
domain codes and affine variety codes~\cite{AG,BookAG}. In particular
an improvement to the Goppa bound for primary one-point algebraic
geometric codes is given in~\cite[Th.~33 and Pr.~37]{AG}. Recent
papers~\cite{lee,LBO} show how to decode a certain class of one-point algebraic geometric codes up
to half the value of Andersen and Geil's bound. (See~\cite[Prop.~6]{GMRsubmitted} for a
proof that the error-correcting capability is actually that of Andersen
and Geil's bound). A generalization of
the previous algorithm \cite{LBO} for decoding general primary one-point
algebraic geometric codes has been given in
\cite{GMRisit,GMRsubmitted}, furthermore it can decode beyond that number.

Although the two bounds are of a similar flavor, to the best of our
knowledge till now
no general correspondence between the Feng-Rao bound and 
Andersen and Geil's bound has been established. The following is known about
the correspondence: Firstly, Shibuya and Sakaniwa in~\cite{shibuya} derive a
bound on the minimum distance of primary codes. This bound relies on
the Feng-Rao bound for generalized Hamming weights. As demonstrated
in~\cite[Sec.~5]{AG} one can in a certain sense view Andersen and
Geil's 
bound as an improvement to Shibuya and Sakaniwa's bound. 
Secondly, for the case
of isometry-dual one-point algebraic geometric codes it was shown
in~\cite{GMRT} that the Feng-Rao bound and Andersen and Geil's bound
produce the same result. This is in contrast with the general case of
one-point algebraic geometric codes where the two bounds may produce
completely different values~\cite[Ex.\
51]{AG}. In the light of Section~\ref{secfivefive} below,
 \cite[Sec.~4]{MM} constitutes another example of the two
bounds producing different values. Finally, a result in a different direction was established
in~\cite{GMRT} where it was shown that for one-point algebraic geometric codes one can view 
 Andersen and Geil's bound as a consequence of the Beelen bound \cite{Beelen} for
more point codes and thereby also as a consequence of the Duursma-Kirov-Park
bound \cite{DK,DKP} for such codes. However, it seems prohibitively difficult to prove
the equality between the error correction capability of \cite{LBO,lee}
and half the bounds in \cite{Beelen,DK,DKP}, while
we proved  in just a few lines \cite[Prop.\ 6]{GMRsubmitted}
the equality between \cite{LBO,lee} and half the bound in \cite{AG}.
This demonstrates that Andersen and Geil's bound \cite{AG}
is much more convenient than \cite{Beelen,DK,DKP} in some cases,
though the former \cite{AG} is implied by the latter \cite{Beelen,DK,DKP}.

The translations via generalized Hamming weights and via 
 more point
codes do not seem to suggest a simple connection between the
Feng-Rao bound and Andersen and Geil's bound for minimum
distance. Nevertheless, we shall
demonstrate that such a connection does indeed exist.
As a consequence, we can see that the error correction capability of the
recently proposed decoding algorithms \cite{LBO,lee}
is equal to the Feng-Rao decoding algorithm for primary codes \cite{MM}.

The above connection is of academic interest itself. But maybe more
importantly, it enables us to decode primary codes up to what is
guaranteed by Andersen and Geil's bound. As shall be demonstrated in
the present paper it suffices to derive a particular dual description
of the codes by means of linear algebra, and then to apply the three-bases
 generalization of the Feng-Rao decoding algorithm
in~\cite[Sec.\ 4.3]{handbook} and \cite[Sec.\ 2]{MM}, while a similar
generalization appeared much earlier in \cite{early}. The technique
applies to a large
variety of codes for which no efficient decoding algorithms are
known. This includes important families of multivariate polynomial
codes often considered by theoretical computer
scientists.
Another implication of the above mentioned
connection is that it becomes clear that Andersen and Geil's bound is
in some sense a generalization of Matsumoto and Miura's bound for
primary one-point algebraic geometric
codes~\cite[Secs.\ 3 \& 4]{MM}. This also implies that the decoding method
of the present paper can be viewed as a generalization of the decoding
method for primary one-point algebraic geometric codes in~\cite{MM}. It
should be mentioned that another generalization of Matsumoto and
Miura's bound and decoding method is given by Beelen and H{\o}holdt in~\cite{petertom}
where more point codes $C_{\mathcal{L}}(D,G)$ are treated.

The present paper starts by treating in Sections~\ref{sectwo} and
\ref{secthree} the general case of linear codes. Section~\ref{sectwo}
describes the state of the art and Section~\ref{secthree}  establishes the connection
between the two bounds. In Section~\ref{secfour} we briefly discuss
how to use the results from Section~\ref{sectwo} and \ref{secthree}
when a supporting algebra is given. A couple of examples illustrate
the idea. The new decoding method for primary codes is then treated in
Section~\ref{secfive}. In Section~\ref{secfivefive} we investigate the
connection between Andersen and Geil's bound when applied to primary
one-point algebraic geometric codes and the bound by Matsumoto and
Miura for similar codes. The correspondence in Section~\ref{secthree}
has implications for the estimation of generalized Hamming weights. We
briefly comment on this fact in Section~\ref{secsix}.

As a consequence of our findings in the present paper we suggest
that the Feng-Rao bound is in the future called ``the Feng-Rao bound for
dual codes'' and that Andersen and Geil's bound is called ``the Feng-Rao
bound for primary codes''. Similarly, we suggest that the order bound is
in the future called ``the order bound for dual codes'' whereas the bound
by Andersen and Geil for order domain codes is named ``the order bound
for primary codes''. We shall stick to this naming throughout the
remaining part of the paper.

\section{The general linear code formulation}\label{sectwo}
Let ${\mathcal{B}}=\{ {\bm{b}}_1, \ldots ,  {\bm{b}}_n\}$ be a basis for 
${\mathbb{F}}_q^n$ as a vector space over ${\mathbb{F}}_q$. Consider a non-empty set $I\subseteq \{1, 2, \ldots
, n \}$. We shall study the code  $$C({\mathcal{B}},I)={\textrm{span}}_{\lm{q}}\{\bm{b}_i \mid
i \in I\}$$ and its dual, which we denote by 
$C^{\perp}({\mathcal{B}},I)$.\\
Let  $L_{-1}=\emptyset$, $L_0=\{ \bm{0}\}$, and define for
$l=1,\ldots , n$, $L_l={\textrm{span}}_{\lm{q}}\{\bm{b}_1, \ldots ,\bm{b}_{l}\}$.
We have $$\emptyset=L_{-1} \subsetneq L_0
\subsetneq \cdots \subsetneq L_n=\vr{q}{n}.$$ Hence, the
following definition makes sense.

\begin{definition}\label{defone}
Define $\widebar{\rho}_{\mathcal{B}}: \vr{q}{n} \rightarrow \{0, 1, \ldots ,n\}$ by
$\widebar{\rho}_{\mathcal{B}}(\bm{v})=l$ if $\bm{v} \in L_l \backslash L_{l-1}$.
\end{definition}

We equip ${\mathbb{F}}_q^n$ with a second binary operation namely the
component-wise product
$$(u_1, \ldots , u_n) \ast (v_1, \ldots , v_n)=(u_1v_1, \ldots , u_nv_n).$$ 
With the above in hand we can introduce the concept of well-behaving pairs
which plays a fundamental role in the Feng-Rao bound for dual codes as
well as in the Feng-Rao bound for primary codes.

\begin{definition}
Consider two bases ${\mathcal{B}}=\{ {\bm{b}}_1, \ldots ,
{\bm{b}}_n\}$ and ${\mathcal{U}}=\{ \bm{u}_1,\ldots \bm{u}_n\}$ for $
\vr{q}{n}$ as vector space over ${\mathbb{F}}_q$ (we may or may not have ${\mathcal{B}}={\mathcal{U}}$).\\
An ordered pair $(i,j) \in \{1, 2, \ldots , n\}\times \{1, 2, \ldots , n\} $ is said to be well-behaving
(WB) with respect to $({\mathcal{B}},{\mathcal{U}})$ if $$\widebar{\rho}_{\mathcal{B}}(\bm{b}_u \ast \bm{u}_v)
< \widebar{\rho}_{\mathcal{B}}(\bm{b}_i \ast \bm{u}_j)$$ holds 
for all $u$ and $v$ with $1 \leq u \leq i, 1 \leq v \leq j$ and $(u,v)
\neq (i, j)$. \\
An ordered pair $(i,j) \in \{1, 2, \ldots , n\}\times \{1, 2, \ldots ,
n\} $ is said to be weakly well-behaving
(WWB) with respect to $({\mathcal{B}},{\mathcal{U}})$ if 
$$
\begin{array}{c}\widebar{\rho}_{\mathcal{B}}(\bm{b}_u \ast \bm{u}_j)
< \widebar{\rho}_{\mathcal{B}}(\bm{b}_i \ast \bm{u}_j),\\
\widebar{\rho}_{\mathcal{B}}(\bm{b}_i \ast \bm{u}_v)
< \widebar{\rho}_{\mathcal{B}}(\bm{b}_i \ast \bm{u}_j)
\end{array}
$$ 
hold
for all $u < i$ and $v<j$.\\
Even less restrictively an ordered pair $(i,j) \in \{1,2,\ldots , n\} \times \{1, 2, \ldots , n\}$ is said
to be one-way well-behaving (OWB) with respect to $({\mathcal{B}},{\mathcal{U}})$ 
if $$\widebar{\rho}_{\mathcal{B}}(\bm{b}_u \ast \bm{u}_j) <
\widebar{\rho}_{\mathcal{B}}(\bm{b}_i \ast \bm{u}_j)$$ holds 
for $u<i$.
\end{definition}

\begin{remark}\label{remrom}
Clearly, WB implies WWB which again implies OWB.
\end{remark}

The Feng-Rao bound for dual codes and the Feng-Rao bound for primary codes are about counting well-behaving pairs satisfying
certain criteria. Assume in the following that ${\mathcal{B}}$ and ${\mathcal{U}}$ are
fixed. To introduce the 
Feng-Rao bound for dual codes we define for 
$l=1,2, \ldots , n$ 
\begin{eqnarray}
\widebar{\mu}^{\textrm{WB}}_{({\mathcal{B}},{\mathcal{U}})}(l)&=&\sharp \{ i \in \{1, 2, \ldots , n\}  \mid  \widebar{\rho}(\bm{b}_i \ast \bm{u}_j)=l {\mbox{ \ for some \ }}
\bm{u}_j \in {\mathcal{U}} \nonumber \\
&&   {\mbox{\ \hspace{6.5cm} with \ }}  (i,j) {\mbox{\   WB
}}\}, \nonumber \\
\widebar{\mu}^{\textrm{OWB}}_{({\mathcal{B}},{\mathcal{U}})}(l)&=&\sharp \{ i \in \{1,2,\ldots , n\}  \mid  \widebar{\rho}(\bm{b}_i \ast \bm{u}_j)=l {\mbox{ \ for some \ }}
\bm{u}_j \in {\mathcal{U}}  \nonumber \\
&& {\mbox{\  \hspace{6.5cm} with \ }}  (i,j) {\mbox{\   OWB
}}\}. \nonumber 
\end{eqnarray}
We stress that our definition of $\widebar{\mu}^{\textrm{WB}}_{({\mathcal{B}},{\mathcal{U}})}$ is equivalent to the
following form usually found in the literature:
\begin{eqnarray}
\widebar{\mu}^{\textrm{WB}}_{({\mathcal{B}},{\mathcal{U}})}(l)&=&\sharp\{(i,j) \in \{1,2,\ldots ,
n\}\times \{1, 2, \ldots , n\}  \mid \widebar{\rho}(\bm{b}_i \ast
\bm{u}_j)=l \nonumber \\
&&{\mbox{ \ \hspace{6.0cm}  \ }}
 {\mbox{\ with \ }}  (i,j) {\mbox{\   WB
}}\}. \nonumber
\end{eqnarray}
To introduce the Feng-Rao bound for primary codes we define for $i=1,2,\ldots , n$
\begin{eqnarray}
\widebar{\sigma}^{\textrm{WB}}_{({\mathcal{B}},{\mathcal{U}})}(i)&=&\sharp\{ l \in \{1, 2, \ldots , n\} \mid \widebar{\rho}(\bm{b}_{i}\ast \bm{u}_j) = l \ {\mbox{for
    some}} \   \bm{u}_j \in {\mathcal{U}}\nonumber \\
&& {\mbox{\ \hspace{6.5cm} with \ }} \
(i,j)  {\mbox{\  WB}}\}, \nonumber \\
\widebar{\sigma}^{\textrm{OWB}}_{({\mathcal{B}},{\mathcal{U}})}(i)&=&\sharp \{ l \in \{1,2,\ldots , n\} \mid \widebar{\rho}(\bm{b}_{i}\ast \bm{u}_j) = l \ {\mbox{for
    some}} \   \bm{u}_j \in {\mathcal{U}}\nonumber \\
&& {\mbox{\ \hspace{6.5cm} with \ }} \
(i,j)  {\mbox{\  OWB}}\}.\nonumber
\end{eqnarray}

In the next theorem the first bound is the Feng-Rao bound for dual codes. The latter is 
the Feng-Rao bound for primary codes.
\begin{theorem}\label{thebounds}
The minimum distance of $C^\perp({\mathcal{B}},I)$ satisfies:
\begin{eqnarray}
d(C^\perp({\mathcal{B}},I))&\geq&\min \{\widebar{\mu}^{\textrm{OWB}}_{({\mathcal{B}},{\mathcal{U}})}(l) \mid l \notin
I\}\label{bounddual1}\\
&\geq&\min \{\widebar{\mu}^{\textrm{WB}}_{({\mathcal{B}},{\mathcal{U}})}(l) \mid l \notin
I\}.\label{bounddual3}
\end{eqnarray}
The minimum distance of $C({\mathcal{B}},I)$ satisfies:
\begin{eqnarray}
d(C({\mathcal{B}},I))&\geq&\min \{\widebar{\sigma}^{\textrm{OWB}}_{({\mathcal{B}},{\mathcal{U}})}(i) \mid i \in I\} \label{boundprimary1}\\
&\geq&\min \{\widebar{\sigma}^{\textrm{WB}}_{({\mathcal{B}},{\mathcal{U}})}(i) \mid i \in I\}. \label{boundprimary3}
\end{eqnarray}
\end{theorem}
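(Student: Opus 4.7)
The plan is to prove \eqref{bounddual1} and \eqref{boundprimary1} directly; the inequalities \eqref{bounddual3} and \eqref{boundprimary3} then follow immediately from Remark~\ref{remrom}, since WB implies OWB and so every pair contributing to the WB-count also contributes to the OWB-count, yielding the termwise inequalities $\widebar{\mu}^{\textrm{OWB}}_{(\mathcal{B},\mathcal{U})}(l)\geq \widebar{\mu}^{\textrm{WB}}_{(\mathcal{B},\mathcal{U})}(l)$ and $\widebar{\sigma}^{\textrm{OWB}}_{(\mathcal{B},\mathcal{U})}(i)\geq \widebar{\sigma}^{\textrm{WB}}_{(\mathcal{B},\mathcal{U})}(i)$. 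Thus the OWB form is automatically the stronger of the two in each case.

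For the primary bound \eqref{boundprimary1} I would fix a nonzero $\bm{c}\in C(\mathcal{B},I)$ and set $i^{\ast}:=\widebar{\rho}_{\mathcal{B}}(\bm{c})$. Since $\bm{c}\in{\textrm{span}}_{\lm{q}}\{\bm{b}_{i}\mid i\in I\}$, the $\mathcal{B}$-expansion $\bm{c}=c_{i^{\ast}}\bm{b}_{i^{\ast}}+\sum_{i<i^{\ast}}c_{i}\bm{b}_{i}$ satisfies $i^{\ast}\in I$ and $c_{i^{\ast}}\neq 0$. For each of the $\widebar{\sigma}^{\textrm{OWB}}_{(\mathcal{B},\mathcal{U})}(i^{\ast})$ witnessed values $l$, pick one $j(l)$ with $(i^{\ast},j(l))$ OWB and $\widebar{\rho}_{\mathcal{B}}(\bm{b}_{i^{\ast}}\ast\bm{u}_{j(l)})=l$. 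The OWB condition then forces $\widebar{\rho}_{\mathcal{B}}(\bm{b}_{i}\ast\bm{u}_{j(l)})<l$ for every $i<i^{\ast}$, so multiplying the expansion of $\bm{c}$ by $\bm{u}_{j(l)}$ preserves the leading term and produces $\widebar{\rho}_{\mathcal{B}}(\bm{c}\ast\bm{u}_{j(l)})=l$. Distinct values of $l$ therefore yield vectors with distinct leading positions, hence linearly independent; all of them have support inside $\Supp(\bm{c})$, so their count $\widebar{\sigma}^{\textrm{OWB}}_{(\mathcal{B},\mathcal{U})}(i^{\ast})$ is at most $\mathrm{wt}(\bm{c})$.

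For the dual bound \eqref{bounddual1} I would take nonzero $\bm{c}\in C^{\perp}(\mathcal{B},I)$ and, crucially, set $r:=\min\{l\mid\bm{c}\cdot\bm{b}_{l}\neq 0\}$; choosing the minimum (rather than the more tempting maximum) is what will make the argument clean. Since $\bm{c}\cdot\bm{b}_{i}=0$ for $i\in I$, one has $r\notin I$, and by minimality $\bm{c}\cdot\bm{b}_{l}=0$ for every $l<r$. The syndrome matrix with entries $S_{i,j}:=\bm{c}\cdot(\bm{b}_{i}\ast\bm{u}_{j})$ factors as $S=B\,\mathrm{diag}(\bm{c})\,U^{T}$, where $B$ and $U$ are the invertible matrices with rows $\bm{b}_{i}$ and $\bm{u}_{j}$, so $\mathrm{rank}(S)=\mathrm{wt}(\bm{c})$. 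Writing $T=\{i_{1}<\dots<i_{t}\}$ with $t=\widebar{\mu}^{\textrm{OWB}}_{(\mathcal{B},\mathcal{U})}(r)$ and choosing a witnessing $j(i_{a})$ for each $a$, the plan is to show that the $t\times t$ submatrix $M$ with $M_{a,b}:=S_{i_{a},j(i_{b})}$ is lower triangular with nonzero diagonal; this forces $\mathrm{wt}(\bm{c})=\mathrm{rank}(S)\geq t$.

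The main obstacle is precisely this triangularity claim, and it is where the minimality of $r$ pays off. Expanding $\bm{b}_{i}\ast\bm{u}_{j}=\sum_{l\leq\widebar{\rho}(\bm{b}_{i}\ast\bm{u}_{j})}\gamma^{i,j}_{l}\bm{b}_{l}$ with $\gamma^{i,j}_{\widebar{\rho}(\bm{b}_{i}\ast\bm{u}_{j})}\neq 0$ gives $S_{i,j}=\sum_{l\leq\widebar{\rho}(\bm{b}_{i}\ast\bm{u}_{j})}\gamma^{i,j}_{l}(\bm{c}\cdot\bm{b}_{l})$; by the minimality of $r$ every contribution with $l<r$ vanishes, so $S_{i,j}=0$ whenever $\widebar{\rho}(\bm{b}_{i}\ast\bm{u}_{j})<r$, while $\widebar{\rho}(\bm{b}_{i}\ast\bm{u}_{j})=r$ leaves only the leading term $\gamma^{i,j}_{r}(\bm{c}\cdot\bm{b}_{r})$, which is nonzero. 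The diagonal entries $M_{a,a}$ are of this second type by construction, while for $a<b$ the OWB property of $(i_{b},j(i_{b}))$ together with $i_{a}<i_{b}$ forces $\widebar{\rho}(\bm{b}_{i_{a}}\ast\bm{u}_{j(i_{b})})<r$ and hence $M_{a,b}=0$. With the naive choice $r=\max\{l\mid\bm{c}\cdot\bm{b}_{l}\neq 0\}$, the uncontrolled contributions $\bm{c}\cdot\bm{b}_{l}$ for $l<r$ with $l\notin I$ would spoil the vanishing above the diagonal, so the minimum is essential.
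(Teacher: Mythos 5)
Your proof is correct. Note that the paper itself does not prove Theorem~\ref{thebounds} at all --- it simply cites \cite[Th.~1]{geithom} for (\ref{bounddual1}) and (\ref{boundprimary1}) --- so what you have written is a self-contained reconstruction of the arguments that live in the cited literature, and it is sound in every step. Your primary-code argument (leading index $i^\ast=\widebar{\rho}_{\mathcal{B}}(\bm{c})\in I$, then producing $\widebar{\sigma}^{\textrm{OWB}}_{({\mathcal{B}},{\mathcal{U}})}(i^\ast)$ vectors $\bm{c}\ast\bm{u}_{j(l)}$ with distinct $\widebar{\rho}$-values, all supported on $\Supp(\bm{c})$) is exactly the Andersen--Geil argument \cite{AG,geithom}; your dual-code argument (the factorization $S=B\,\mathrm{diag}(\bm{c})\,U^{T}$ giving $\mathrm{rank}(S)=w_H(\bm{c})$, followed by the triangular minor indexed by the OWB witnesses at level $r$) is the classical Feng--Rao syndrome-rank argument \cite{FR24,handbook} in the two-basis linear-algebra formulation. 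Two points you handle well and that are genuinely the crux: the choice of $r$ as the \emph{minimal} index with $\bm{c}\cdot\bm{b}_{r}\neq 0$ (so that $S_{i,j}$ collapses to the single term $\gamma^{i,j}_{r}(\bm{c}\cdot\bm{b}_{r})$ whenever $\widebar{\rho}(\bm{b}_i\ast\bm{u}_j)=r$ and vanishes whenever $\widebar{\rho}(\bm{b}_i\ast\bm{u}_j)<r$), and the observation that only the one-sided half of the well-behaving condition is ever used, which is precisely why the bounds hold for OWB and why (\ref{bounddual3}) and (\ref{boundprimary3}) then follow from Remark~\ref{remrom} by termwise comparison. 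A minor implicit point worth making explicit: no two indices $i_a<i_b$ in $T$ can share a witness column, since $(i_b,j)$ being OWB with $\widebar{\rho}(\bm{b}_{i_b}\ast\bm{u}_{j})=r$ forces $\widebar{\rho}(\bm{b}_{i_a}\ast\bm{u}_{j})<r$; your triangularity proof does not actually need this, but it reassures the reader that the $t$ chosen columns of $S$ are distinct.
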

\begin{proof}
For proofs of the bounds~(\ref{bounddual1}) and (\ref{boundprimary1})
see~\cite[Th.\ 1]{geithom}. 
\qed
\end{proof}

The bound~(\ref{boundprimary1}) is a slight enhancement 
of
(\ref{boundprimary3}), the latter being introduced for the first time
in~\cite{AG}. This explains why we in Section~\ref{sec1} 
 referred to both bounds as Andersen and Geil's
bound. The bound~(\ref{bounddual3}) is a special case of
the three-bases formulation
\cite[Sec.\ 4.3]{handbook}, \cite[Sec.\ 2]{MM} of the original Feng-Rao
bound~\cite{FR24,FR1,FR2}. The formulation in~\cite{handbook,MM} involves
three
bases rather than only the above two.
We note that
three bases were also used earlier in \cite{early}
for expressing the idea of Feng and Rao \cite{FR24} in the
general linear code formulation. 
The contribution of~\cite{geithom} is the
notion of OWB as a generalization of WWB \cite{miura1,miura2}. If we replace OWB
with WWB in~(\ref{bounddual1}) then we get another special case of \cite{MM}.

Note, that Theorem~\ref{thebounds}
allows us to construct improved codes by choosing $I$ cleverly
according to the $\widebar{\mu}$ respectively $\widebar{\sigma}$
values. Remark~\ref{remrom} demonstrates that Theorem~\ref{thebounds}
can also be formulated in a version with WWB instead of OWB or
WB (See~\cite{MM}). However, the result to be shown in the next section that
(\ref{bounddual1}) and (\ref{boundprimary1}) 
respectively (\ref{bounddual3}) and (\ref{boundprimary3}) are
consequences of each other seemingly does not hold for WWB.

\section{The two bounds are consequences of each other}\label{secthree}
In this section we consider two bases ${\mathcal{G}}=\{\bm{g}_1, \ldots ,
\bm{g}_n\}$ and ${\mathcal{H}}=\{\bm{h}_1, \ldots , \bm{h}_n\}$ for
${\mathbb{F}}_q^n$. They shall both be used in replacement of the
${\mathcal{B}}$ from the previous section. The basis ${\mathcal{U}}=\{ \bm{u}_1,\ldots
\bm{u}_n\}$ 
will be fixed throughout the section. Hence, we will be concerned with
WB and OWB pairs with respect to $({\mathcal{G}},{\mathcal{U}})$ as
well as with respect to $({\mathcal{H}},{\mathcal{U}})$.\\
We shall assume the following strong relation
\begin{equation}
\bm{g}_i\cdot\bm{h}_j=\delta_{i,n-j+1} \label{dualbases}
\end{equation}
where the expression on the left side is the inner product and the
expression on the right side is Kronecker's delta. Clearly, if ${\mathcal{G}}$ is
given then there is a unique choice of ${\mathcal{H}}$ such that~(\ref{dualbases})
holds and vice versa. Let $G$ be an $n \times n$ matrix where
row $i$ equals $\bm{g}_i$ and let $H$ be an $n \times n$ matrix where
column $j$ equals $\bm{h}_{n-j+1}^T$. Then indeed $H=G^{-1}$. 

The above correspondence gives us an alternative expression for the
function $\widebar{\rho}$ (Definition~\ref{defone}), namely that for any
non-zero $\bm{v}$ 
$$\widebar{\rho}_{\mathcal{G}}(\bm{v})=\max \{k \mid \bm{v} \cdot \bm{h}_{n-k+1} \neq
0\},$$
$$\widebar{\rho}_{\mathcal{H}}(\bm{v})=\max \{k \mid \bm{v} \cdot \bm{g}_{n-k+1} \neq
0\}.$$
For $I \subseteq \{1, \ldots , n\}$ denote 
$$\widebar{I}=\{1, \ldots , n\}
\backslash \{n-i+1 \mid i \in I\}$$ 
and observe
$\widebar{\widebar{I}}=I$. Equation~(\ref{dualbases}) corresponds to saying
\begin{eqnarray}
C({\mathcal{G}},I)&=&C^\perp({\mathcal{H}},\widebar{I})\nonumber 
\end{eqnarray}
which can of course also be formulated
\begin{eqnarray}
C({\mathcal{G}},\widebar{I})&=&C^\perp({\mathcal{H}},I).\nonumber 
\end{eqnarray}
We shall show that for $i=1, \ldots , n$ it holds that
\begin{eqnarray}
\widebar{\mu}^{\textrm{WB}}_{({\mathcal{H}},{\mathcal{U}})}(n-i+1)&=&\widebar{\sigma}^{\textrm{WB}}_{({\mathcal{G}},{\mathcal{U}})}(i),
\nonumber \\ 
\widebar{\mu}^{\textrm{OWB}}_{({\mathcal{H}},{\mathcal{U}})}(n-i+1)&=&\widebar{\sigma}^{\textrm{OWB}}_{({\mathcal{G}},{\mathcal{U}})}(i)\nonumber 
\end{eqnarray}
which is to say that for WB and OWB the
Feng-Rao bound for dual codes and the Feng-Rao bound for primary codes are consequences of each
other. The fact that a similar result seemingly does not hold for WWB might partially explain why the
correspondences  of the present paper has not been found before. 

\begin{lemma}\label{lemlom1}
The following statements are equivalent:
\begin{enumerate}
\item $\widebar{\rho}_{\mathcal{G}}(\bm{g}_i \ast \bm{u}_j)=k$\\ and $(i,j)$
  is WB with respect to $({\mathcal{G}},{\mathcal{U}})$.
\item $\widebar{\rho}_{\mathcal{H}}(\bm{h}_{n-k+1} \ast \bm{u}_j)=n-i+1$\\ and $(n-k+1,j)$
  is WB with respect to $({\mathcal{H}},{\mathcal{U}})$.
\end{enumerate}
\end{lemma}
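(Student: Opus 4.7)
My plan is to funnel both statements through a single scalar. For each triple $(a,v,k)$ define
\[
\psi(a,v,k) \;:=\; (\bm{g}_a \ast \bm{u}_v) \cdot \bm{h}_{n-k+1}.
\]
Since the componentwise product commutes with the inner product, this number also equals $(\bm{h}_{n-k+1}\ast \bm{u}_v) \cdot \bm{g}_a$, so a single $\psi$ simultaneously controls $\widebar{\rho}_{\mathcal{G}}(\bm{g}_a\ast\bm{u}_v)$ and $\widebar{\rho}_{\mathcal{H}}(\bm{h}_{n-k+1}\ast\bm{u}_v)$. Using the two alternative formulas for $\widebar{\rho}$ displayed immediately after~(\ref{dualbases}), together with the reindexing $a=n-m+1$, one obtains
\[
\widebar{\rho}_{\mathcal{G}}(\bm{g}_a\ast\bm{u}_v) \;=\; \max\{k \mid \psi(a,v,k)\neq 0\},
\]
\[
\widebar{\rho}_{\mathcal{H}}(\bm{h}_{n-k+1}\ast\bm{u}_v) \;=\; n - \min\{a \mid \psi(a,v,k)\neq 0\} + 1.
\]

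With these formulas the plan is to rewrite both statements purely as vanishing conditions on $\psi$. Statement~(1) amounts to: $\psi(i,j,k)\neq 0$; $\psi(i,j,k')=0$ for all $k'>k$; and (from WB for $(i,j)$ with respect to $({\mathcal{G}},{\mathcal{U}})$) $\psi(u,v,k')=0$ for every $(u,v)$ with $u\le i$, $v\le j$, $(u,v)\neq(i,j)$ and every $k'\ge k$. Statement~(2), after substituting $k'=n-u+1$ to turn $u\le n-k+1$ into $k'\ge k$ and rewriting $\widebar{\rho}_{\mathcal{H}}(\bm{h}_u\ast\bm{u}_v)< n-i+1$ as $\min\{a \mid \psi(a,v,k')\neq 0\}>i$, amounts to: $\psi(i,j,k)\neq 0$; $\psi(a,j,k)=0$ for all $a<i$; and $\psi(a,v,k')=0$ for every $a\le i$, $v\le j$ and every $k'\ge k$ with $(k',v)\neq(k,j)$.

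After this reformulation the equivalence reduces to a short case analysis over the ``corner'' region $\{(a,v,k'):a\le i,\; v\le j,\; k'\ge k\}\setminus\{(i,j,k)\}$. For (1)$\Rightarrow$(2), the axis clause $\psi(a,j,k)=0$ for $a<i$ is the $v=j$, $k'=k$ instance of WB in~(1); the WB clause of~(2) is then recovered by splitting on $a<i$, on $a=i$ with $v<j$ (both covered by WB of~(1)), and on $a=i$, $v=j$, $k'>k$ (covered by the max clause of~(1)). The converse is symmetric: the max clause of~(1) is the $(a,v)=(i,j)$, $k'>k$ instance of WB in~(2), and the remainder of WB of~(1) is obtained by splitting on $k'>k$ versus $k'=k$ and then on $v<j$ versus $v=j$.

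The main obstacle I anticipate is purely notational: the two order-reversing substitutions $k\leftrightarrow n-k+1$ and $u\leftrightarrow n-u+1$ swap $\max$ with $\min$ and $\le$ with $\ge$, and the real content of the lemma is precisely that these swaps carry the corner region dictated by WB on the $({\mathcal{G}},{\mathcal{U}})$ side onto the corner region dictated by WB on the $({\mathcal{H}},{\mathcal{U}})$ side. Beyond~(\ref{dualbases}) and the alternative expressions for $\widebar{\rho}$, no further ingredient is needed.
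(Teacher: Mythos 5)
Your proof is correct and follows essentially the same route as the paper: your scalar $\psi$ is exactly the paper's identity $(\bm{v}\ast\bm{w})\cdot\bm{s}=(\bm{s}\ast\bm{w})\cdot\bm{v}$ combined with the dual expressions for $\widebar{\rho}_{\mathcal{G}}$ and $\widebar{\rho}_{\mathcal{H}}$, and your corner-region case analysis reproduces the paper's lists of vanishing inner products. The only difference is presentational --- by showing both statements equal one symmetric vanishing condition you make the paper's ``follows by symmetry'' step explicit rather than appealing to it.
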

\begin{proof}
We shall make extensive use of the correspondence
\begin{equation}
(\bm{v} \ast \bm{w}) \cdot \bm{s}=(\bm{s}\ast \bm{w})\cdot \bm{v}. \label{cirkus}
\end{equation}
Assume that statement 1 holds. That is
assume $\widebar{\rho}_{\mathcal{G}}(\bm{b}_i \ast \bm{u}_j)=k$ and
$(i,j)$ is WB with respect to $({\mathcal{G}},{\mathcal{U}})$. From
the first half of this assumption we get
\begin{eqnarray}
(\bm{g}_i \ast \bm{u}_j) \cdot \bm{h}_{n-k+1} &\neq 0, \label{trek1}\\
(\bm{g}_i \ast \bm{u}_j) \cdot \bm{h}_{t}&=0&{\mbox{ \ for \ }} t <n-k+1, \label{trek2}
\end{eqnarray}
and from the latter half
\begin{eqnarray}
(\bm{g}_{i^\prime} \ast \bm{u}_j) \cdot \bm{h}_{t}&=0& {\mbox{ \ for \ }}
i^\prime <i {\mbox{ \ and \ }}t \leq n-k+1,\label{trek3}\\
(\bm{g}_i \ast \bm{u}_{j^\prime}) \cdot \bm{h}_{t}&=0& {\mbox{ \ for \
  }} j^\prime <j {\mbox{ \ and \ }}t \leq n-k+1,\nonumber \\
(\bm{g}_{i^\prime} \ast \bm{u}_{j^\prime}) \cdot \bm{h}_{t}&=0& {\mbox{ \ for \
  }} i^\prime < i,  j^\prime <j {\mbox{ \ and \ }}t \leq
n-k+1.\nonumber \\ 
\end{eqnarray}
Using
(\ref{trek1}), (\ref{trek3}) in combination with
(\ref{cirkus}) gives
\begin{eqnarray}
(\bm{h}_{n-k+1}\ast \bm{u}_j)\cdot \bm{g}_i &\neq& 0,\label{star1}\\
(\bm{h}_{n-k+1}\ast \bm{u}_j)\cdot \bm{g}_{i^\prime}& =&0 {\mbox{ \ for
    \ }} i^\prime < i .\label{star2}
\end{eqnarray}
In a similar fashion we derive at
\begin{eqnarray}
(\bm{h}_{t}\ast \bm{u}_j)\cdot \bm{g}_{i^\prime} &=0& {\mbox{ \ for \
  }} t < n-k+1 {\mbox{ \ and \ }} i^\prime \leq i,\label{star3} \\
(\bm{h}_{n-k+1}\ast \bm{u}_{j^\prime})\cdot \bm{g}_{i^\prime}& =0& {\mbox{ \ for \
  }} j^\prime < j {\mbox{ \ and \ }} i^\prime \leq i,\label{star4} \\
(\bm{h}_{t}\ast \bm{u}_{j^\prime})\cdot \bm{g}_{i^\prime} &=0& {\mbox{ \ for \
  }} t < n-k+1, j^\prime < j, {\mbox{ \ and \ }} i^\prime \leq i.\label{star5}
\end{eqnarray}
Expressions (\ref{star1}) and (\ref{star2}) mean that
$\widebar{\rho}_{\mathcal{H}}(\widebar{h}_{n-k+1}\ast \bm{u}_j)=n-i+1$ and
(\ref{star3}), (\ref{star4}), (\ref{star5}) imply that $(n-k+1,j)$ is WB
with respect to $({\mathcal{H}},{\mathcal{U}})$. In other words
statement 2 is true. That statement 2 implies statement 1 follows by symmetry.
\qed
\end{proof}

\begin{lemma}\label{lemlom2}
The following statements are equivalent
\begin{enumerate}
\item $\widebar{\rho}_{\mathcal{G}}(\bm{g}_i \ast \bm{u}_j)=k$\\ and $(i,j)$
  is OWB with respect to $({\mathcal{G}},{\mathcal{U}})$.
\item $\widebar{\rho}_{\mathcal{H}}(\bm{h}_{n-k+1} \ast \bm{u}_j)=n-i+1$\\ and $(n-k+1,j)$
  is OWB with respect to $({\mathcal{H}},{\mathcal{U}})$.
\end{enumerate}
\end{lemma}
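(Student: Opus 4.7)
The plan is to mirror the structure of the proof of Lemma~\ref{lemlom1}, but with a shorter list of dot-product relations to translate, since OWB only constrains the case where the second index is kept fixed at $j$. The key tool, as before, is the swap identity $(\bm{v}\ast\bm{w})\cdot\bm{s}=(\bm{s}\ast\bm{w})\cdot\bm{v}$ in~(\ref{cirkus}), together with the two dual expressions for $\widebar{\rho}$ derived from~(\ref{dualbases}):
$$\widebar{\rho}_{\mathcal{G}}(\bm{v})=\max\{k\mid \bm{v}\cdot\bm{h}_{n-k+1}\neq 0\},\qquad \widebar{\rho}_{\mathcal{H}}(\bm{v})=\max\{k\mid \bm{v}\cdot\bm{g}_{n-k+1}\neq 0\}.$$

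First I would unpack statement 1 as three families of dot-product equalities/inequalities: (A) $(\bm{g}_i\ast\bm{u}_j)\cdot\bm{h}_{n-k+1}\neq 0$ (from $\widebar{\rho}_{\mathcal{G}}(\bm{g}_i\ast\bm{u}_j)=k$ being attained); (B) $(\bm{g}_i\ast\bm{u}_j)\cdot\bm{h}_t=0$ for $t<n-k+1$ (from the same $\widebar{\rho}$ value being the maximum); and (C) $(\bm{g}_{i'}\ast\bm{u}_j)\cdot\bm{h}_t=0$ for $i'<i$ and $t\leq n-k+1$ (from OWB together with the fact that $\widebar{\rho}_{\mathcal{G}}(\bm{g}_{i'}\ast\bm{u}_j)<k$). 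Then I would apply~(\ref{cirkus}) termwise to (A), (B), (C), obtaining respectively: $(\bm{h}_{n-k+1}\ast\bm{u}_j)\cdot\bm{g}_i\neq 0$; $(\bm{h}_t\ast\bm{u}_j)\cdot\bm{g}_i=0$ for $t<n-k+1$; and $(\bm{h}_t\ast\bm{u}_j)\cdot\bm{g}_{i'}=0$ for $i'<i$ and $t\leq n-k+1$.

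Finally I would reassemble these swapped relations into the assertions of statement 2. Relation (A) swapped, taken with the $i'<i$ part of (C) swapped at $t=n-k+1$, shows $\widebar{\rho}_{\mathcal{H}}(\bm{h}_{n-k+1}\ast\bm{u}_j)=n-i+1$. The $t<n-k+1$ parts of (B) swapped and (C) swapped together yield $(\bm{h}_t\ast\bm{u}_j)\cdot\bm{g}_{i'}=0$ for $t<n-k+1$ and all $i'\leq i$, which is exactly the condition $\widebar{\rho}_{\mathcal{H}}(\bm{h}_t\ast\bm{u}_j)<n-i+1$ for $t<n-k+1$, i.e.\ OWB of $(n-k+1,j)$ with respect to $(\mathcal{H},\mathcal{U})$. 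The converse direction then follows by the symmetric role of $\mathcal{G}$ and $\mathcal{H}$ in~(\ref{dualbases}), since applying the same argument with $\mathcal{G}$ and $\mathcal{H}$ interchanged (and $k$ replaced by $n-i+1$, $i$ replaced by $n-k+1$) yields statement 1 from statement 2.

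The main obstacle, I expect, is purely bookkeeping: keeping the indices $i,i',k,t$ and the ranges $i'<i$, $i'\leq i$, $t<n-k+1$, $t\leq n-k+1$ aligned correctly when flipping between $\mathcal{G}$ and $\mathcal{H}$, and verifying that the conditions transported by~(\ref{cirkus}) are exactly the OWB conditions on the other side with nothing missing and nothing extra. The simplification compared to Lemma~\ref{lemlom1} is that no $j'<j$ relations appear, so the full list of equalities to check is noticeably shorter.
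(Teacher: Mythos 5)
Your proposal is correct and follows essentially the same route as the paper: unpack statement 1 into the relations~(\ref{trek1}), (\ref{trek2}), (\ref{trek3}), transport them via the swap identity~(\ref{cirkus}) to obtain~(\ref{star1}), (\ref{star2}), (\ref{star3}), and reassemble these into statement 2, with the converse by the symmetry of~(\ref{dualbases}). The index bookkeeping you flag works out exactly as you describe, and your observation that the $j'<j$ relations drop out is precisely the simplification the paper's proof also exploits relative to Lemma~\ref{lemlom1}.
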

\begin{proof}
Assume that statement 1 holds. The first part implies~(\ref{trek1})
and (\ref{trek2}) and from the latter part we get
(\ref{trek3}). Using (\ref{trek1}), (\ref{trek3}) in combination
with (\ref{cirkus}) gives (\ref{star1}) and (\ref{star2}). Combining
(\ref{trek2}), (\ref{trek3}) with (\ref{cirkus}) we get
(\ref{star3}). 
Expressions (\ref{star1}) and (\ref{star2}) mean that
$\widebar{\rho}_{\mathcal{H}}(\widebar{h}_{n-k+1}\ast \bm{u}_j)=n-i+1$ and
(\ref{star3}) implies that $(n-k+1,j)$ is OWB
with respect to $({\mathcal{H}},{\mathcal{U}})$. In other words
statement 2 is true. That statement 2 implies statement 1 follows by symmetry.
\qed
\end{proof}

\begin{theorem}\label{kjlj}
Assume that ${\mathcal{G}},{\mathcal{H}}$ satisfy
condition~(\ref{dualbases}). Let a non-empty set $I \subseteq \{1, 2,
\ldots , n\}$ be given. Then 
$C({\mathcal{G}},I)=C^\perp({\mathcal{H}},\widebar{I})$ and
\begin{eqnarray}
\min \{ \widebar{\mu}_{({\mathcal{H}},{\mathcal{U}})}^{\textrm{OWB}}(l) \mid l \neq
\widebar{I} \} &=&\min\{\widebar{\sigma}_{({\mathcal{G}},{\mathcal{U}})}^{\textrm{OWB}}(i)
\mid i \in I\}, \label{eq78} \\
\min \{ \widebar{\mu}_{({\mathcal{H}},{\mathcal{U}})}^{\textrm{WB}}(l) \mid l \neq
\widebar{I} \} &=&\min\{\widebar{\sigma}_{({\mathcal{G}},{\mathcal{U}})}^{\textrm{WB}}(i)
\mid i \in I\}. \label{eq79} 
\end{eqnarray}
Similar results hold with the role of $I$ and $\widebar{I}$ interchanged.
\end{theorem}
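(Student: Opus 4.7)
The plan is to reduce Theorem \ref{kjlj} to the pointwise identities
\[
\widebar{\mu}^{\textrm{WB}}_{(\mathcal{H},\mathcal{U})}(n-i+1)=\widebar{\sigma}^{\textrm{WB}}_{(\mathcal{G},\mathcal{U})}(i),\qquad
\widebar{\mu}^{\textrm{OWB}}_{(\mathcal{H},\mathcal{U})}(n-i+1)=\widebar{\sigma}^{\textrm{OWB}}_{(\mathcal{G},\mathcal{U})}(i)
\]
announced just before the theorem, and then to use the bijection $i\mapsto n-i+1$ between $I$ and $\{1,\dots,n\}\setminus\widebar{I}$ to translate the two minimizations into one another. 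The identity $C(\mathcal{G},I)=C^{\perp}(\mathcal{H},\widebar{I})$ has already been recorded as a direct consequence of (\ref{dualbases}), so nothing new is required there.

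To establish the pointwise equality for the WB case, I would read Lemma \ref{lemlom1} as a bijection on the sets being counted. Fix $i$. The quantity $\widebar{\sigma}^{\textrm{WB}}_{(\mathcal{G},\mathcal{U})}(i)$ counts the values $k\in\{1,\dots,n\}$ for which there exists $\bm{u}_j\in\mathcal{U}$ with $\widebar{\rho}_\mathcal{G}(\bm{g}_i\ast \bm{u}_j)=k$ and $(i,j)$ WB with respect to $(\mathcal{G},\mathcal{U})$. By Lemma \ref{lemlom1}, for each such $k$ there is $\bm{u}_j$ with $\widebar{\rho}_\mathcal{H}(\bm{h}_{n-k+1}\ast \bm{u}_j)=n-i+1$ and $(n-k+1,j)$ WB with respect to $(\mathcal{H},\mathcal{U})$, and conversely. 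Changing variable to $l=n-k+1$ turns the set being counted into exactly
\[
\{\,l\in\{1,\dots,n\}\mid \widebar{\rho}_\mathcal{H}(\bm{h}_l\ast \bm{u}_j)=n-i+1 \text{ for some }\bm{u}_j\in\mathcal{U}\text{ with }(l,j)\text{ WB}\,\},
\]
which by the equivalent formulation of $\widebar{\mu}$ noted in Section \ref{sectwo} has cardinality $\widebar{\mu}^{\textrm{WB}}_{(\mathcal{H},\mathcal{U})}(n-i+1)$. The OWB case is identical, substituting Lemma \ref{lemlom2} for Lemma \ref{lemlom1}.

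To finish, I would observe that by the very definition of $\widebar{I}$ the map $i\mapsto n-i+1$ restricts to a bijection between $I$ and the complement of $\widebar{I}$ in $\{1,\dots,n\}$. Hence, writing $l=n-i+1$, the minimization of $\widebar{\mu}_{(\mathcal{H},\mathcal{U})}^{\textrm{OWB}}(l)$ over $l\notin\widebar{I}$ coincides with the minimization of $\widebar{\mu}_{(\mathcal{H},\mathcal{U})}^{\textrm{OWB}}(n-i+1)=\widebar{\sigma}_{(\mathcal{G},\mathcal{U})}^{\textrm{OWB}}(i)$ over $i\in I$, giving (\ref{eq78}); the same substitution using the WB identity gives (\ref{eq79}). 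The symmetric statement with $I$ and $\widebar{I}$ interchanged follows immediately from $\widebar{\widebar{I}}=I$ and from the fact that Lemmas \ref{lemlom1} and \ref{lemlom2} are themselves symmetric in $\mathcal{G}$ and $\mathcal{H}$.

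The only genuine subtlety, and the step I would check most carefully, is that the equivalence of statements 1 and 2 in each of Lemmas \ref{lemlom1} and \ref{lemlom2} really is a bijection of \emph{sets} and not merely an implication of cardinalities: the key point is that the $j$-coordinate is preserved under the correspondence, so once $k$ (equivalently $l=n-k+1$) is chosen the witness $\bm{u}_j$ transfers unchanged. Everything else in the proof is bookkeeping around the involution $i\leftrightarrow n-i+1$.
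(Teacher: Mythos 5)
Your proposal is correct and follows exactly the paper's route: the paper's own proof is the one-line observation that the theorem ``follows from Lemma~\ref{lemlom1}, Lemma~\ref{lemlom2} and the fact that $i\in I \Leftrightarrow n-i+1 \notin \widebar{I}$,'' which is precisely the reduction you carry out, merely spelled out in more detail (including the worthwhile check that the witness $\bm{u}_j$ is preserved so the lemmas give genuine bijections of the counted sets).
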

\begin{proof}
Follows from Lemma~\ref{lemlom1}, Lemma~\ref{lemlom2} and the fact
that $i\in I \Leftrightarrow n-i+1 \notin \widebar{I}$.
\qed
\end{proof} 

\begin{remark}\label{rem77}
Consider the following assumption which is weaker than~(\ref{dualbases}):
\begin{equation}
\left\{ 
\begin{array}{ll}
\bm{g}_i \cdot \bm{h}_{n-i+1} \neq 0, \\
\bm{g}_i \cdot \bm{h}_j=0& {\mbox{ \ \ for \ }} j <n-i+1.
\end{array} 
\right.
\label{eq77}
\end{equation}
Inspecting the proofs of Lemma~\ref{lemlom1} and \ref{lemlom2} we see
that~(\ref{eq78}) and (\ref{eq79}) still hold. However,
$C({\mathcal{G}},I)=C^\perp({\mathcal{H}},\widebar{I})$ is only 
guaranteed to hold when $I$ is of the form $I=\{1, 2, \ldots , k\}$
under the assumption (\ref{eq77}) that is
weaker than (\ref{dualbases}). Therefore (\ref{eq77}) does not allow us
to translate information regarding improved primary codes $C({\mathcal{G}},I)$
and improved dual codes $C^\perp({\mathcal{H}},\widebar{I})$.
\end{remark}

The above remark is in the spirit of \cite{duursma94}
where Duursma showed how to speed up the erasure decoding of
algebraic geometric codes,
by introducing a condition \cite[Def.\ 4]{duursma94} that is
implied by (\ref{dualbases}) and implies (\ref{eq77}).
After having finished the manuscript we learned that Duursma is aware of
the implication (\ref{eq79}) of (\ref{eq77}) to the
Feng-Rao bound for the case of non-improved codes, i.e.\ 
$I$ is of the form $\{1$, \ldots, $k\}$.

The following example illustrates the last part of the remark.
\begin{example}
Let $n=4$ and $I=\{3,4\}$. This gives $\widebar{I}=\{3,4\}$. Assume ${\mathcal{G}}$
and ${\mathcal{H}}$ are given such that (\ref{eq77}) is satisfied. We keep ${\mathcal{H}}$, leaving
$C^\perp ({\mathcal{H}},\widebar{I})$ intact, but allow for redefinition of
${\mathcal{G}}$ by replacing for $i=1, \ldots , 4$, $\bm{g}_i$ with any other
vector of $\widebar{\rho}_{\mathcal{G}}$-value equal to $i$. This possibly redefines
$C({\mathcal{G}},I)$ but keeps the estimates of the minimum distance
the same. Even allowing for the above redefinition (\ref{eq77}) is not
enough to guarantee that for any of the above choices of $g_3,
g_4$ we arrive at $\bm{h}_i \cdot \bm{g_j}=0$ for all $i=3,4$ and
$j=3,4$. Hence, we have no guarantee that all of the resulting codes
$C({\mathcal{G}},I)$ equals $C^\perp ({\mathcal{H}},\widebar{I})$.
\end{example}

\section{Utilizing a supporting algebra}\label{secfour}
As mentioned in the introduction the results in Section~\ref{sectwo}
and \ref{secthree} are universal in the sense that they can in particular be
applied when given various types of supporting algebra. We now explain how to do
this in the case of the supporting algebra being an order
domain. We shall restrict to order functions that are weight
functions. From~\cite[Sec.\ 6]{AG} we have the following couple of results.

\begin{definition}\label{defweightfunction}
Let $R$ be
an ${\mathbb{F}}_q$-algebra and let $\Gamma$ be a subsemigroup of
${\mathbb{N}}_{0}^r$ for some $r$. Let $\prec$ be a monomial ordering
on ${\mathbb{N}}_0^r$. A surjective map $\rho : R \rightarrow \Gamma_{-
  \infty}=\Gamma \cup \{- \infty\}$ that satisfies the following six
conditions is said to be a weight function
$$
\begin{array}{rl}
(W.0)&\rho(f)=-\infty \ {\mbox{if and only if}} \ f=0.\\
(W.1)&\rho(af)=\rho(f) \ {\mbox{for all non-zero}} \ a \in
{\mathbb{F}}_q.\\
(W.2)&\rho(f+g) \preceq \max \{ \rho(f), \rho(g) \}\ {\mbox{ and equality
holds when}} \ \rho(f) \prec \rho(g).\\
(W.3)& {\mbox{If}} \  \rho(f) \prec \rho(g) \ {\mbox{and}} \ h \neq 0,
\ {\mbox{then}} \ \rho(fh) \prec \rho(gh).\\
(W.4)&{\mbox{If $f$ and $g$ are non-zero and $\rho(f)=\rho(g)$, then
    there}}\\
&{\mbox{exists a non-zero $a \in {\mathbb{F}}_q$ such that $\rho(f-ag)
    \prec \rho(g)$}}.\\
(W.5)&{\mbox{If $f$ and $g$ are non-zero then \ }} \rho(fg)=\rho(f)+\rho(g).
\end{array}
$$
An ${\mathbb{F}}_q$-algebra with a weight function is called an order
domain over ${\mathbb{F}}_q$.  
\end{definition}

\begin{theorem}\label{above}
Given a weight function then any
set
${\mathcal{B}}=\{f_{\gamma} \mid \rho(f_{\gamma})=\gamma \}_{\gamma \in
\Gamma}$ constitutes a basis for $R$ as a vector space over
${\mathbb{F}}_q$. 
In particular $\{f_{\lambda} \in {\mathcal{B}} \mid
\lambda \preceq \gamma \}$ constitutes a basis for $R_{\gamma}=\{f
\in R
\mid \rho(f) \preceq \gamma \}$.
\qed
\end{theorem}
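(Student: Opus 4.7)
The plan is to prove the two assertions in the standard order: first linear independence of $\mathcal{B}$, then the fact that $\mathcal{B}$ spans $R$, and finally use both together to handle the ``In particular'' statement about $R_\gamma$.

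For linear independence, I would take a supposed non-trivial vanishing combination $\sum_{i=1}^{m} a_i f_{\gamma_i} = 0$ with distinct $\gamma_i \in \Gamma$ and $a_i \in \mathbb{F}_q^*$, reindex so that $\gamma_m$ is the $\prec$-maximum among $\{\gamma_1,\dots,\gamma_m\}$, and then exploit (W.1) together with an induction on $m$ using (W.2): each partial sum $\sum_{i=1}^{m-1} a_i f_{\gamma_i}$ has weight strictly below $\gamma_m$, so by the ``equality'' clause of (W.2) the full sum has weight exactly $\gamma_m \neq -\infty$, contradicting (W.0). This is short and entirely routine.

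For spanning, the key point is that monomial orderings on $\mathbb{N}_0^r$ are well-orderings, so $\prec$ restricted to $\Gamma$ is well-founded. I would argue by contradiction: if some non-zero $f \in R$ is not a finite $\mathbb{F}_q$-linear combination of elements of $\mathcal{B}$, pick such an $f$ whose weight $\gamma := \rho(f)$ is $\prec$-minimal among all counterexamples. By (W.4) applied to $f$ and $f_\gamma$, there is a non-zero $a \in \mathbb{F}_q$ with $\rho(f - a f_\gamma) \prec \gamma$. Either $f - a f_\gamma = 0$, in which case $f = a f_\gamma \in \mathrm{span}(\mathcal{B})$, or $f - a f_\gamma$ is non-zero of strictly smaller weight, hence by minimality a finite combination from $\mathcal{B}$; adding $a f_\gamma$ then writes $f$ itself as such a combination. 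Either way we contradict the choice of $f$. I expect this to be the main (though still mild) obstacle, because one must remember to invoke well-foundedness of $\prec$ explicitly rather than, say, induction on a natural-number parameter.

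For the refined statement, linear independence of $\{f_\lambda \mid \lambda \preceq \gamma\}$ is inherited from that of $\mathcal{B}$. For spanning of $R_\gamma$, take a non-zero $f \in R_\gamma$ and write it via the first part as $f = \sum_{i=1}^m a_i f_{\gamma_i}$ with distinct $\gamma_i$ and non-zero $a_i$. Exactly the same ``leading weight'' computation used for independence shows $\rho(f) = \max_{\prec}\{\gamma_i\}$, and since $\rho(f) \preceq \gamma$ this forces every $\gamma_i \preceq \gamma$, so the expansion lies in the span of $\{f_\lambda \mid \lambda \preceq \gamma\}$. This completes the proof; no further appeal to (W.3) or (W.5) is needed for this particular statement, which is a small sanity check that the theorem really is a consequence of (W.0)--(W.2) and (W.4) alone.
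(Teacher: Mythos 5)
Your proof is correct. The paper gives no argument of its own for this theorem --- it is quoted from \cite{AG} and closed with a \qed{} --- and your three steps (linear independence via the leading-weight computation from (W.1)--(W.2), spanning by combining (W.4) with the well-ordering of the monomial order, and the restriction to $R_\gamma$ via the same leading-weight identity) constitute exactly the standard proof given there; your observation that (W.3) and (W.5) are not needed is also accurate.
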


\begin{definition}\label{defmorphism}
Let $R$ be an ${\mathbb{F}}_q$-algebra. A surjective map $\varphi : R \rightarrow
\vr{q}{n}$ is called a morphism of ${\mathbb{F}}_q$-algebras if $\varphi$ is
${\mathbb{F}}_q$-linear and $\varphi(fg)=\varphi(f) \ast \varphi(g)$ for all $f,
g \in R$.
\end{definition}

\begin{definition}\label{defalpha}\label{ovenfor}
Assume that $\varphi$ is a morphism as
in Definition~\ref{defmorphism}. Let 
$\alpha(1)={\bm{0}}$. For $i=2, 3, \ldots ,n$ define recursively $\alpha(i)$
to be the smallest element in $\Gamma$ that is greater than $\alpha(1),
\alpha(2), \ldots ,\alpha(i-1)$ and satisfies $\varphi(R_{\gamma})
\subsetneq \varphi (R_{\alpha(i)})$ for all $\gamma \prec
\alpha(i)$. Write $\Delta(R,\rho,\varphi)=\{\alpha(1), \alpha(2),
\ldots ,\alpha(n)\}$.
\end{definition}

The following theorem is a synthesis \cite[Th.~25, Pro.~27, 28]{AG}.

\begin{theorem}\label{thesnabel}
Let $\Delta(R,\rho,\varphi)=\{\alpha(1), \alpha(2),
\ldots ,\alpha(n)\}$ be as in Definition~\ref{ovenfor}. The set
\begin{equation}
{\mathcal{B}}=\{ \bm{b}_1=\varphi (f_{\alpha(1)}), \bm{b}_2=\varphi
(f_{\alpha(2)}), \ldots, \bm{b}_n=\varphi (f_{\alpha(n)})\}
\label{snabel}
\end{equation} 
constitutes a basis for ${\mathbb{F}}_{q}^{n}$ as a vector space over
${\mathbb{F}}_{q}$.

If $\gamma , \lambda \in \Gamma$ 
satisfy $\gamma + \lambda =\alpha(l)$ for some $\alpha(l) \in
\Delta(R,\rho,\varphi)$ 
then
$\gamma=\alpha(i)$ and $\lambda=\alpha(j)$ for some $i,j \in \{1,
\ldots , n\}$ and $(i,j)$ is WB with respect to $({\mathcal{B}},{\mathcal{B}})$.
\qed
\end{theorem}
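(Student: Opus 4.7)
The plan is to prove the basis claim first, and then handle the well-behavedness assertion in two steps.

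For the basis, I begin by sharpening the recursive definition of $\Delta$: between consecutive elements, i.e.\ for $\alpha(i-1) \preceq \gamma \prec \alpha(i)$, the image $\varphi(R_\gamma)$ is constant. Indeed, if some such $\gamma$ had $\varphi(R_\gamma) \supsetneq \varphi(R_{\alpha(i-1)})$, then the smallest one would itself satisfy the defining condition of Definition~\ref{defalpha} and lie strictly between $\alpha(i-1)$ and $\alpha(i)$, contradicting the minimality of $\alpha(i)$. Combined with the strict inclusion $\varphi(R_{\alpha(i-1)}) \subsetneq \varphi(R_{\alpha(i)})$ from the definition, this forces $\bm{b}_i = \varphi(f_{\alpha(i)})$ to lie outside $\varphi(R_{\alpha(i-1)})$. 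A routine induction then identifies $\{\bm{b}_1, \ldots, \bm{b}_i\}$ with a basis of $\varphi(R_{\alpha(i)})$, and taking $i = n$ together with surjectivity of $\varphi$ gives the basis claim~(\ref{snabel}).

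Next I show $\gamma, \lambda \in \Delta$. Assume toward a contradiction that $\gamma \notin \Delta$. Then $\varphi(f_\gamma)$ lies in the span of $\{\bm{b}_k \mid \alpha(k) \prec \gamma\}$ by the constancy established above, so one can find scalars $c_k$ for which $g := f_\gamma - \sum_k c_k f_{\alpha(k)}$ satisfies $\varphi(g) = \bm{0}$, while axioms (W.1) and (W.2) give $\rho(g) = \gamma$. By (W.5), $\rho(g f_\lambda) = \gamma + \lambda = \alpha(l)$, and the morphism property forces $\varphi(g f_\lambda) = \bm{0}$. Applying (W.4) to $f_{\alpha(l)}$ and $g f_\lambda$ produces a nonzero $a \in \mathbb{F}_q$ with $\rho(f_{\alpha(l)} - a\, g f_\lambda) \prec \alpha(l)$, so $\bm{b}_l = \varphi(f_{\alpha(l)} - a\, g f_\lambda) \in \varphi(R_\beta)$ for some $\beta \prec \alpha(l)$. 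The constancy from the previous paragraph places $\varphi(R_\beta)$ inside $L_{l-1}$, contradicting $\bm{b}_l \notin L_{l-1}$. Symmetry gives $\lambda \in \Delta$, so we may write $\gamma = \alpha(i)$ and $\lambda = \alpha(j)$.

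Finally, to verify that $(i,j)$ is WB with respect to $({\mathcal{B}}, {\mathcal{B}})$, take any $(u,v) \neq (i,j)$ with $u \leq i$ and $v \leq j$. Compatibility of the monomial order with addition yields $\alpha(u) + \alpha(v) \prec \alpha(l)$ strictly, so $\bm{b}_u \ast \bm{b}_v = \varphi(f_{\alpha(u)} f_{\alpha(v)}) \in \varphi(R_{\alpha(u)+\alpha(v)}) \subseteq L_{l-1}$ by constancy, giving $\widebar{\rho}_{\mathcal{B}}(\bm{b}_u \ast \bm{b}_v) < l$. Meanwhile, (W.4) applied to $f_{\alpha(l)}$ and $f_{\alpha(i)} f_{\alpha(j)}$ writes $\bm{b}_i \ast \bm{b}_j$ as a nonzero scalar multiple of $\bm{b}_l$ modulo $L_{l-1}$, so $\widebar{\rho}_{\mathcal{B}}(\bm{b}_i \ast \bm{b}_j) = l$, completing the WB check. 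The principal obstacle is the step $\gamma, \lambda \in \Delta$: one must parlay a hypothetical kernel element at weight $\gamma$, via (W.5) and the morphism property, into a kernel element at the extremal weight $\alpha(l)$ which then collides with the basis property established in step one.
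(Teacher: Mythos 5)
Your proof is correct, but there is nothing in the paper to compare it against: the theorem is quoted with an empty proof as a ``synthesis'' of \cite[Th.~25, Pro.~27, 28]{AG}, so the relevant benchmark is Andersen and Geil's original argument, which yours essentially reconstructs. The three ingredients you isolate --- the constancy of $\varphi(R_\gamma)$ on the gaps between consecutive elements of $\Delta(R,\rho,\varphi)$, the use of (W.2), (W.4) and (W.5) to transport a hypothetical kernel element of weight $\gamma$ into one of weight $\alpha(l)$ that collides with $\bm{b}_l \notin L_{l-1}$, and the compatibility of the monomial ordering with addition for the well-behaving check --- are exactly the ones on which the cited proofs rest. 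Two details you leave implicit are easily supplied: the induction identifying $\{\bm{b}_1, \ldots , \bm{b}_i\}$ with a basis of $\varphi(R_{\alpha(i)})$ needs the observation that $\varphi(R_{\alpha(i)})=\varphi(R_{\alpha(i-1)})+{\mathbb{F}}_q\bm{b}_i$, which follows from Theorem~\ref{above} combined with your constancy claim; and in the contradiction step the element $f_{\alpha(l)}-a\,gf_\lambda$ could in principle be zero, but then $\bm{b}_l=\varphi(f_{\alpha(l)})=a\varphi(g)\ast\varphi(f_\lambda)=\bm{0}$, which contradicts the basis property just as directly. With those remarks your argument is a complete, self-contained proof of the statement.
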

Theorem~\ref{thesnabel} in combination with
  (\ref{boundprimary3}) respectively (\ref{bounddual3}) proves the
  order bound for primary \cite{AG} respectively dual \cite{handbook}
  order domain codes. To formulate these bounds we will need a definition.

\begin{definition}\label{defmusique}
Let notation be as in Definition~\ref{ovenfor}. For $\lambda \in
\Delta(R, \rho ,\varphi )$ we define
$$\sigma (\lambda) = \sharp \{ \eta \in \Delta (R, \rho , \varphi )
\mid \eta = \lambda + \gamma {\mbox{ \ for some \ }} \gamma \in
\Gamma\}.$$
For $\eta \in \Delta(R, \rho , \varphi )$ define
$$\mu (\eta ) = \sharp \{ \lambda \in \Gamma \mid \lambda +\gamma =
\eta {\mbox{ \ for some \ }}\gamma \in \Gamma \}.$$ 
\end{definition}

The bounds are:
\begin{theorem}\label{theorderbounds}
Let ${\mathcal{B}}$ be as in Theorem~\ref{thesnabel}. The minimum
distance of $C({\mathcal{B}},I)$ is at least $\min \sigma (\alpha (i))
\mid i \in I\}$. The minimum distance of $C^\perp ({\mathcal{B}},I)$
is at least $\min \{ \mu (\alpha(l)) \mid l \in \{ 1, \ldots , n\}
\backslash I\}$. \qed
\end{theorem}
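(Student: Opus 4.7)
The plan is to invoke Theorem~\ref{thebounds} with the basis $\mathcal{B}$ constructed in Theorem~\ref{thesnabel} and with the choice $\mathcal{U}=\mathcal{B}$. This reduces the statement to verifying the two inequalities
\begin{equation*}
\widebar{\sigma}^{\textrm{WB}}_{(\mathcal{B},\mathcal{B})}(i)\ \geq\ \sigma(\alpha(i)),\qquad \widebar{\mu}^{\textrm{WB}}_{(\mathcal{B},\mathcal{B})}(l)\ \geq\ \mu(\alpha(l)),
\end{equation*}
for every $i,l\in\{1,\ldots,n\}$; combining them with (\ref{boundprimary3}) and (\ref{bounddual3}) yields the two desired lower bounds on the minimum distance.

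The crux of the argument is the following $\widebar{\rho}$-identification: whenever $\alpha(i),\alpha(j),\alpha(l)\in\Delta(R,\rho,\varphi)$ satisfy $\alpha(i)+\alpha(j)=\alpha(l)$, one has $\widebar{\rho}_{\mathcal{B}}(\bm{b}_i\ast\bm{b}_j)=l$. To see this, note that $\bm{b}_i\ast\bm{b}_j=\varphi(f_{\alpha(i)}f_{\alpha(j)})$ since $\varphi$ is a morphism, and that $\rho(f_{\alpha(i)}f_{\alpha(j)})=\alpha(l)$ by (W.5). Iterated use of (W.4) lets us write $f_{\alpha(i)}f_{\alpha(j)}=c\,f_{\alpha(l)}+r$ with $c\in\lm{q}\setminus\{0\}$ and $\rho(r)\prec\alpha(l)$. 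The defining property of $\Delta$ forces $\varphi(R_\gamma)\subseteq\varphi(R_{\alpha(l-1)})=L_{l-1}$ for every $\gamma\prec\alpha(l)$, so $\varphi(r)\in L_{l-1}$. Since $\bm{b}_l\in L_l\setminus L_{l-1}$ and $c\neq 0$, we conclude $\bm{b}_i\ast\bm{b}_j\in L_l\setminus L_{l-1}$.

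With this identification in place the two inequalities become bookkeeping. For the primary side, fix $i$ and let $\eta\in\Delta$ be one of the $\sigma(\alpha(i))$ elements satisfying $\eta=\alpha(i)+\gamma$ for some $\gamma\in\Gamma$. Theorem~\ref{thesnabel} gives $\gamma=\alpha(j)$, $\eta=\alpha(l)$ and that $(i,j)$ is WB with respect to $(\mathcal{B},\mathcal{B})$, while the preceding step gives $\widebar{\rho}_{\mathcal{B}}(\bm{b}_i\ast\bm{b}_j)=l$. Thus $l$ contributes to $\widebar{\sigma}^{\textrm{WB}}_{(\mathcal{B},\mathcal{B})}(i)$, and distinct $\eta$ produce distinct $l$, proving $\widebar{\sigma}^{\textrm{WB}}_{(\mathcal{B},\mathcal{B})}(i)\geq \sigma(\alpha(i))$. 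The dual side is entirely symmetric: each $\lambda$ counted by $\mu(\alpha(l))$ equals some $\alpha(i)$, the associated pair $(i,j)$ is WB and satisfies $\widebar{\rho}_{\mathcal{B}}(\bm{b}_i\ast\bm{b}_j)=l$, and the map $\lambda\mapsto i$ is injective.

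The main obstacle is the $\widebar{\rho}$-identification step, because it is the only place where the non-trivial structure of $\Delta$ is needed; the remainder is a clean translation via Theorem~\ref{thesnabel} between sum decompositions in $\Delta$ and well-behaving pairs in $(\mathcal{B},\mathcal{B})$.
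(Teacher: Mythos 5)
Your proposal is correct and follows the paper's own route: the paper proves this theorem precisely by combining Theorem~\ref{thesnabel} with the bounds (\ref{boundprimary3}) and (\ref{bounddual3}) from Theorem~\ref{thebounds}, taking ${\mathcal{U}}={\mathcal{B}}$. The only difference is that you explicitly verify the identification $\widebar{\rho}_{\mathcal{B}}(\bm{b}_i\ast\bm{b}_j)=l$ when $\alpha(i)+\alpha(j)=\alpha(l)$ (via (W.4), (W.5) and the definition of $\Delta$), a fact the paper treats as part of the cited synthesis of \cite[Th.~25, Pro.~27, 28]{AG}; your argument for it is sound.
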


Just as with the Feng-Rao bounds, the order bounds suggest improved
code constructions by choosing $I$ cleverly.
The best-known example of an order domain is the
following. Consider an algebraic function field over ${\mathbb{F}}_q$
of transcendence degree $1$. Let $P_1, \ldots , P_n,Q$ be rational
places. Then $R=\sum_{m=0}^\infty {\mathcal{L}}(mQ)$ is an order
domain with a weight function $\rho(f)=-\nu_Q(f)$. Here, $\nu_Q$ is
the discrete valuation corresponding to $Q$. Let $\varphi$ be defined
by $\varphi(f)=(f(P_1), \ldots , f(P_n))$. Applying the standard
notation for Weierstrass semigroups we have $\Gamma=H(Q)$. The
corresponding set
$\Delta(R,\rho,\varphi)$ sometimes in the literature is denoted
$H^\ast(Q)$ \cite{GMRT}. 

It is well-known \cite{petertom,MM} how to decode primary codes defined from algebraic function
fields of transcendence degree $1$. In Section~\ref{secfive} we shall
see how to decode primary codes coming form 
order
domains of higher transcendence degree. Examples of order domains of
higher transcendence degree can be found in
\cite{AG,galindo1,galindo2,galindo3,GP,little,OSullivan}. 
The most simple example
of such an order domain is $R={\mathbb{F}}_q[X_1, \ldots , X_m]$ with
a weight function given by $\rho(X_1^{i_1} \cdots X_m^{i_m})=(i_1,
\ldots , i_m)$ and the ordering $\prec$
(Definition~\ref{defweightfunction}) being any fixed monomial
ordering. Evaluating in all the points of ${\mathbb{F}}_q^{m}$ we get
$q$-ary Reed-Muller codes, Massey-Costello-Justesen codes (also known
as hyperbolic codes) and weighted Reed-Muller codes. Theoretical
computer scientists take a special interest
in the case where rather
than evaluating in all of ${\mathbb{F}}_q^{m}$ one evaluates in an
arbitrary Cartesian product $S_1 \times \cdots \times S_m$, where $S_1,
\ldots ,S_m \subseteq {\mathbb{F}}_q$. Such codes can have
surprisingly good parameters as demonstrated
in~\cite{weightedreedmuller}.
Their good parameters cannot be
obtained as the punctured codes of the ordinary
Reed-Muller codes. The following two examples illustrate
the theory from Section~\ref{secthree} when $R={\mathbb{F}}_q[X,Y]$
and the Cartesian product $S_1 \times S_2$ is a proper subset of ${\mathbb{F}}_q^{2}$.

\begin{example}\label{ex2}
Consider $R={\mathbb{F}}_5[X,Y]$, $\rho(X^iY^j)=(i,j)$ and let $\prec$
be the graded lexicographic ordering with $(1,0)\prec (0,1)$. Consider the
point ensemble
\begin{eqnarray}
\{P_1=(1,1),P_2=(1,2),P_3=(1,3),P_4=(2,1), \ldots ,P_9=(3,3)\} {\mbox{ \ \hspace{1cm}}}
\nonumber \\
 =\{1,2,3\} \times \{1,2,3\} \subseteq {\mathbb{F}}_5^2\nonumber
\end{eqnarray}
and the morphism ${\textrm{ev}}: {\mathbb{F}}_5[X.Y]
\rightarrow {\mathbb{F}}_5^9$, given by ${\textrm{ev}}(F)=(F(P_1), \ldots ,
F(P_9))$. The basis~(\ref{snabel}) becomes (writing ${\mathcal{G}}$
instead of ${\mathcal{B}}$ as we want to apply the theory from Section~\ref{secthree})
\begin{eqnarray}
{\mathcal{G}}&=&\{\bm{g}_1={\textrm{ev}}(1),\bm{g}_2={\textrm{ev}}(X),\bm{g}_3={\textrm{ev}}(Y),\bm{g}_4={\textrm{ev}}(X^2),\bm{g}_5={\textrm{ev}}(XY),\nonumber
\\
&& {\mbox{ \ \hspace{1cm}}} \bm{g}_6={\textrm{ev}}(Y^2),\bm{g}_7={\textrm{ev}}(X^2Y),\bm{g}_8={\textrm{ev}}(XY^2),\bm{g}_9={\textrm{ev}}(X^2Y^2)\}. \nonumber
\end{eqnarray}
Still using the notation from Section~\ref{secthree} let
${\mathcal{U}}={\mathcal{G}}$. From Theorem~\ref{thesnabel} we
immediately get
information about WB pairs as described in the following array. Here a
number $x$ in entry $(i,j)$ means that $(i,j)$ is known to be WB and
that $\widebar{\rho}_{\mathcal{G}}(\bm{g}_i \ast \bm{g}_j)=x$.
\begin{equation}
\left[
\begin{array}{rrrrrrrrr}
1&\, \, 2&\, \, 3&\, \, 4&\,\,  5&\,\,  6&\,\,  7&\,\,  8&\,\,  9\\
2&4&5&&7&8&&9\\
3&5&6&7&8&&9\\
4&&7&&&9\\
5&7&8&&9\\
6&8&&9\\
7&&9\\
8&9\\
9
\end{array}
\right]
\label{eqarray}
\end{equation}
It is now an easy task to calculate from this array
\begin{eqnarray}
\begin{array}{lll}
\widebar{\sigma}_{({\mathcal{G}},{\mathcal{G}})}^{{\textrm{WB}}}(1)=9, {\mbox{ \ }}&\widebar{\sigma}_{({\mathcal{G}},{\mathcal{G}})}^{{\textrm{WB}}}(2)\geq 6, {\mbox{ \ }}&\widebar{\sigma}_{({\mathcal{G}},{\mathcal{G}})}^{{\textrm{WB}}}(3)\geq 6,\\
\widebar{\sigma}_{({\mathcal{G}},{\mathcal{G}})}^{{\textrm{WB}}}(4)\geq 3, {\mbox{ \ }}&\widebar{\sigma}_{({\mathcal{G}},{\mathcal{G}})}^{{\textrm{WB}}}(5)\geq 4, {\mbox{ \ }}&\widebar{\sigma}_{({\mathcal{G}},{\mathcal{G}})}^{{\textrm{WB}}}(6)\geq 3,\\
\widebar{\sigma}_{({\mathcal{G}},{\mathcal{G}})}^{{\textrm{WB}}}(7)\geq 2, {\mbox{ \ }}&\widebar{\sigma}_{({\mathcal{G}},{\mathcal{G}})}^{{\textrm{WB}}}(8)\geq 2, {\mbox{ \ }}&\widebar{\sigma}_{({\mathcal{G}},{\mathcal{G}})}^{{\textrm{WB}}}(9)\geq 1,\\
\end{array}
\nonumber 
\end{eqnarray}
(actually, equality holds -- also even if we replace WB with OWB -- but
we will not prove this fact). Using a computer  we found
$\bm{h}_1, \ldots , \bm{h}_9$ and the corresponding polynomials. We
have:
\begin{eqnarray}
\bm{h}_1&=&{\textrm{ev}}(X^2Y^2+XY^2+X^2Y+XY), \nonumber \\
\bm{h}_2&=&{\textrm{ev}}(X^2Y^2+3XY^2+X^2Y+Y^2+3XY+Y), \nonumber \\
\bm{h}_3&=&{\textrm{ev}}(X^2Y^2+XY^2+3X^2Y+3XY+X^2+X), \nonumber \\
\bm{h}_4&=&{\textrm{ev}}(XY^2+Y^2+XY+Y), \nonumber \\
\bm{h}_5&=&{\textrm{ev}}(X^2Y^2+3XY^2+3X^2Y+Y^2+4XY+X^2+3Y+3X+1), \nonumber \\
\bm{h}_6&=&{\textrm{ev}}(X^2Y+XY+X^2+X), \nonumber \\
\bm{h}_7&=&{\textrm{ev}}(XY^2+Y^2+3XY+3Y+X+1), \nonumber \\
\bm{h}_8&=&{\textrm{ev}}(X^2Y+3XY+X^2+Y+3X+1), \nonumber \\
\bm{h}_9&=&{\textrm{ev}}(XY+Y+X+1). \nonumber 
\end{eqnarray}
It is clearly not at all an easy task to determine by hand  WB pairs with
respect to $({\mathcal{H}},{\mathcal{G}})$
without using (\ref{eqarray}) and the correspondence in
Lemma~\ref{lemlom1}. A similar observation holds regarding  
$\widebar{\rho}_{\mathcal{H}}(\bm{h}_i \ast \bm{g}_j)$. 
\end{example}
\begin{example}\label{ex3}
Consider $R={\mathbb{F}}_4[X,Y]$ with $\rho$ and $\prec$ as in
Example~\ref{ex2}. Write ${\mathbb{F}}_4={\mathbb{F}}_2[T]\backslash \langle
T^2+T+1\rangle$ and let $\alpha=T+\langle T^2+T+1\rangle$. The point
ensemble
\begin{eqnarray}
\{P_1=(0,1), P_2=(0,\alpha),P_3=(1,1), P_4=(1,\alpha),P_5=(\alpha,1),P_6=(\alpha
,\alpha ) \} \nonumber \\
\,\,\, =\{0,1,\alpha\} \times \{1,\alpha\} \subseteq {\mathbb{F}}_4^2 \nonumber
\end{eqnarray}
defines a morphism ${\textrm{ev}}: {\mathbb{F}}_4[X,Y]
\rightarrow {\mathbb{F}}_4^6$ by ${\textrm{ev}}(F)=(F(P_1), \ldots ,
F(P_6))$. The basis~(\ref{snabel}) becomes
\begin{eqnarray}
{\mathcal{G}}&=&\{
\bm{g}_1={\textrm{ev}}(1),\bm{g}_2={\textrm{ev}}(X),\bm{g}_3={\textrm{ev}}(Y),\nonumber \\
&&\bm{g}_4={\textrm{ev}}(X^2),\bm{g}_5={\textrm{ev}}(XY),\bm{g}_6={\textrm{ev}}(X^2Y)\}. \nonumber
\end{eqnarray}
Again we let ${\mathcal{U}}={\mathcal{G}}$. Using
Theorem~\ref{thesnabel} we easily derive information on
$\widebar{\rho}_{\mathcal{G}}$ as well as information on which pairs are
WB. From this we conclude
$$
\begin{array}{lll}
\widebar{\sigma}_{({\mathcal{G}},{\mathcal{G}})}^{{\textrm{WB}}}(1)=6, {\mbox{ \ }} &\widebar{\sigma}_{({\mathcal{G}},{\mathcal{G}})}^{{\textrm{WB}}}(2)\geq
4,  {\mbox{ \ }}&\widebar{\sigma}_{({\mathcal{G}},{\mathcal{G}})}^{{\textrm{WB}}}(3)\geq 3, \\
\widebar{\sigma}_{({\mathcal{G}},{\mathcal{G}})}^{{\textrm{WB}}}(4)\geq
2,  {\mbox{ \ }}&\widebar{\sigma}_{({\mathcal{G}},{\mathcal{G}})}^{{\textrm{WB}}}(5)\geq 2,  {\mbox{ \ }} &\widebar{\sigma}_{({\mathcal{G}},{\mathcal{G}})}^{{\textrm{WB}}}(6)=1,
\end{array}
$$
(again actually equalities hold). Using a computer we found
$\bm{h}_1, \ldots , \bm{h}_6$ and the polynomials behind them. They
are:
\begin{eqnarray}
\bm{h}_1&=&{\textrm{ev}}(\alpha X +1  ), \nonumber \\
\bm{h}_2&=&{\textrm{ev}}(\alpha X^2+\alpha+1  ), \nonumber \\
\bm{h}_3&=&{\textrm{ev}}(\alpha XY+Y+X+\alpha +1  ), \nonumber \\
\bm{h}_4&=&{\textrm{ev}}(X^2 +(\alpha +1)X+\alpha ), \nonumber \\
\bm{h}_5&=&{\textrm{ev}}(\alpha X^2Y+X^2+(\alpha +1)Y+\alpha  ), \nonumber \\
\bm{h}_6&=&{\textrm{ev}}( X^2Y+(\alpha +1)XY+(\alpha +1)X^2+\alpha
Y+\alpha X +1 ). \nonumber 
\end{eqnarray}
Some information on the values of $\widebar{\mu}_{({\mathcal{H}},{\mathcal{G}})}^{{\textrm{WB}}}(1),
\ldots , \widebar{\mu}_{({\mathcal{H}},{\mathcal{G}})}^{{\textrm{WB}}}(6)$ can be seen directly from
${\mathcal{H}}$ and ${\mathcal{G}}$, but one does not get the complete picture without
using  Lemma~\ref{lemlom2} or alternatively a computer.
\end{example}

\section{Feng-Rao decoding}\label{secfive}
We now turn to the problem of decoding primary codes up
to half the Feng-Rao bound. In this section we assume
that two bases ${\mathcal{G}},{\mathcal{U}}$ are given. We consider a primary
  code $C({\mathcal{G}},I)$ and assume that we have information about
  WB pairs with respect to $({\mathcal{G}},{\mathcal{U}})$. To
  set up the decoding algorithm, the first task is to calculate the
  basis ${\mathcal{H}}$ such that the correspondence~(\ref{dualbases})
  holds. Using Lemma~\ref{lemlom1} we then translates the information we
  have on WB pairs with respect to
  $({\mathcal{G}},{\mathcal{U}})$ into information on WB
  pairs with respect to $({\mathcal{H}},{\mathcal{U}})$. Decoding  $C({\mathcal{G}},I)$ up to half the Feng-Rao bound
  for primary codes by Theorem~\ref{kjlj} now is the same as decoding
  $C^\perp({\mathcal{H}},\widebar{I})$ up to half the Feng-Rao bound for
  dual codes. The contribution of the present paper regarding decoding
  basically is this observation. 
In~\cite{handbook} H{\o}holdt, van Lint and Pellikaan presented 
a couple of decoding algorithms that correct up to
half the order bound for dual codes. Of particular interest to us is the algorithm
in~\cite[Sec.\ 6.3]{handbook}. This algorithm uses majority voting for
unknown syndromes and builds on the work by Feng and Rao~\cite{FR24} and
Duursma \cite{D18,D19}. For this algorithm to be applicable to the
situation of general codes $C^\perp({\mathcal{H}},\widebar{I})$ a couple of
modifications are needed. These modifications were described by
Matsumoto and Miura in~\cite[Sec.\ 2]{MM}. Actually, the description
in~\cite{MM} is a little more general than we need as three bases are
involved in their description in contrast to our two
${\mathcal{H}},{\mathcal{U}}$. Obviously, the problem is solved by
letting two of them being the same. By~\cite[Prop.\ 2.5]{MM} the
algorithm can correct up to
$$(\min\{\widebar{\mu}_{({\mathcal{H}},{\mathcal{U}})}^{\textrm{WB}}(l) \mid l \notin
\widebar{I}\}-1)/2$$ errors in computational complexity ${\mathcal{O}}(n^3)$. 

As the description in~\cite[Sec.\ 2]{MM} is rather brief in the following we shall give an overview of how to apply the
above algorithm to our problem and
illustrate it with an example (we refer to~\cite{handbook,MM} for
more details). Observe that in practice we might only have partial information
on which pairs $(i,j)$ are WB. This is for instance the case when our
information comes from the study of a supporting algebra. Clearly we
can derive exact information on all WB pairs $(i,j)$ and corresponding
values $\widebar{\rho}_{\mathcal{H}}(\bm{h}_i \ast \bm{u}_j)$ in a
preparation step. This can be done in ${\mathcal{O}}(n^4)$ operations
which is more expensive than the algorithm itself. To treat also the
case where one wants to avoid such a preparation step we define
\begin{eqnarray}
\tilde{\sigma}^{\textrm{WB}}_{({\mathcal{G}},{\mathcal{U}})}(i)&=&\sharp\{ l \in \{1, 2, \ldots , n\} \mid \widebar{\rho}(\bm{g}_{i}\ast \bm{u}_j) = l \ {\mbox{for
    some}} \   \bm{u}_j \in {\mathcal{U}}\nonumber \\
&& {\mbox{\ \hspace{3.5cm} where \ }} 
(i,j)  {\mbox{\  is known to be WB}}\} \nonumber 
\end{eqnarray}
and we define $\tilde{\mu}_{({\mathcal{H}},{\mathcal{U}})}^{\textrm{WB}}(i)$ by a
similar correspondence as in Lemma~\ref{lemlom1}. Let
$\bm{r}=\bm{c}+\bm{e}$ be received where $\bm{c} \in
C({\mathcal{G}},I)=C^\perp({\mathcal{H}},\widebar{I})$.
Under the assumption
$$w_H(\bm{e}) \leq \left( \min
  \{\tilde{\sigma}^{\textrm{WB}}_{\mathcal{B}}(i) \mid i \in
  I\}-1\right)/2$$
which is equivalent to
\begin{eqnarray}w_H(\bm{e}) \leq \left( \min
  \{\tilde{\mu}^{\textrm{WB}}_{\mathcal{B}}(l) \mid l \notin
  \widebar{I}\}-1\right)/2 \label{eqihop2}  \end{eqnarray}
we shall determine 
the syndromes 
$${s}_1=\bm{h}_1\cdot \bm{e} , \ldots , {s}_n=
\bm{h}_n\cdot \bm{e} .$$
From this we will be able to calculate
$$\bm{e}^T=H^{-1}\left( \begin{array}{c}s_1\\ \vdots \\
    s_n\end{array}\right).$$
Clearly, $s_i$ is known for all $i \in \widebar{I}$ as for such indexes
$\bm{h}_i\cdot \bm{e} =\bm{h}_i\cdot \bm{r} $. We next determine
the missing $s_i$'s one by one by applying the following procedure
iteratively. We shall need the notation
$$s_{vw}=(\bm{h}_v \ast \bm{u}_w)\cdot \bm{e}, {\mbox \ } 1 \leq
v,w \leq n$$
$$S(i,j)=\left(s_{vw} \right)_{\tiny{\begin{array}{c} 1 \leq v \leq i\\
1 \leq w \leq j
\end{array}}}.$$
Let $l$ be the smallest index such that $s_l$ is not
known. Consider all known ordered WB pairs $(i,j)$ such that
$\widebar{\rho}_{\mathcal{H}}(\bm{h}_i \ast \bm{u}_j)=l$. The WB property
ensures that we know $S(i-1,j-1), S(i,j-1)$ and $S(i-1,j)$. If these
three matrices are of the same rank then  $(i,j)$ is called a
candidate. For each candidate $(i,j)$ we proceed as follows:
\begin{enumerate}
\item Determine the unique scalar $s_{ij}^\prime$ such that $(s_{i1},
  \ldots , s_{i(j-1)},s_{ij}^\prime)$ belongs to the row space of $S(i-1,j)$.
\item The WB property ensures that we can write $s_{ij}$ as a linear
  combination of $s_1, \ldots , s_l$ with the coefficient of $s_l$
  being non-zero. Assuming $s_{ij}=s^\prime_{ij}$ holds, calculate
  $s_l$. This is the vote of $(i,j)$. 
\end{enumerate}
Assumption~(\ref{eqihop2}) ensures that the majority of candidates
vote for the correct value of $s_l$. Hence, we apply majority voting
and proceed to the next unknown $s_l$ if any.

In practice we will of course compute the basis ${\mathcal{H}}$ only
once and then use this basis for every received word. Doing so however
does not affect the fact that the complexity estimate of the algorithm is still
${\mathcal{O}}(n^3)$.

\begin{example}\label{exdecoding}
This is a continuation of Example~\ref{ex2}. In this example we
found the $\widebar{\sigma}^{\textrm{WB}}_{({\mathcal{G}},{\mathcal{G}})}(i)$
values. Regarding the 
$\widebar{\mu}^{\textrm{WB}}_{({\mathcal{H}},{\mathcal{G}})}(l)$
values we only noted that they could most easily be found by using the
correspondence in Lemma~\ref{lemlom1}. Doing this the array of information
regarding WB pairs with respect to $({\mathcal{H}},{\mathcal{G}})$
turns out to be exactly the same as that for WB pairs with respect to
$({\mathcal{G}},{\mathcal{G}})$. That is, the information is described
in (\ref{eqarray}).
Choosing $I=\{1,2,3,5\}$ we get
$C({\mathcal{G}},I)=C^\perp({\mathcal{H}},\widebar{I})$ where
$\widebar{I}=\{1,2,3,4,6\}$. This is a code with parameters
$[n,k,d]=[9,4,4]$. Consider the codeword
$$\bm{c}=4\bm{g}_1+3\bm{g}_2+2\bm{g}_3+\bm{g}_5=(0,3,1,4,3,2,3,3,3).$$
Let $\bm{e}=(0, \ldots , 0,1)$ and $\bm{r}=\bm{c}+\bm{e}$. We shall
use the majority voting algorithm to detect $\bm{e}$ using the
information that $\bm{c}\in C^\perp({\mathcal{H}},\widebar{I})$. The
starting point is to calculate
$$s_1=\bm{h}_1\cdot \bm{r} =4, s_2=\bm{h}_2\cdot \bm{r} =3, s_3= \bm{h}_3\cdot \bm{r} =3, s_4=\bm{h}_4\cdot \bm{r} =3, s_6=\bm{h}_6\cdot \bm{r} =3.$$
We start by looking for $s_5=\bm{h}_5
\cdot \bm{e}$. In the following array a number $k$ in position $(i,j)$
means that $s_{ij}=k$ is known and ? means that
$\widebar{\rho}_{\mathcal{H}}(\bm{h}_i \ast \bm{g}_j)=5$.
$$
\left[
\begin{array}{rrrrr}
4&\,\,2&\,\,2&\,\,1&\,\,?\\
3&\,\,4&\,\,?\\
3&\,\,?\\
3\\
?
\end{array}
\right].
$$
The candidates are $(3,2)$ and $(2,3)$. Both vote
for $s_5=1$. \\
Knowing now $s_1, \ldots , s_6$ we turn to investigate $s_7$. In the
following array again a number $k$ in position $(i,j)$
means that $s_{ij}=k$ is known; but now ? means that
$\widebar{\rho}_{\mathcal{H}}(\bm{h}_i \ast \bm{g}_j)=7$.
$$
\left[
\begin{array}{rrrrrrr}
4&\,\,2&\,\,2&\,\,1&\,\,1&\,\,1&\,\,?\\
3&\,\,4&\,\,4&\,\,2&\,\,?\\
3&\,\,4&\,\,4&\,\,?\\
3&\,\,4&\,\,?\\
1&\,\,?\\
3\\
?
\end{array}
\right].
$$
Candidates are $(5,2), (4,3), (3,4), (2,5)$. All 
vote for $s_7=1$. \\
Turning to $s_8$ the array is
$$
\left[
\begin{array}{rrrrrrrr}
4&\,\,2&\,\,2&\,\,1&\,\,1&\,\,1&\,\,3&\,\,?\\
3&\,\,4&\,\,4&\,\,2&\,\,2&\,\,?\\
3&\,\,4&\,\,4&\,\,2&\,\,?\\
3&\,\,4&\,\,4\\
1&\,\,3&\,\,?\\
3&\,\,?\\
1\\
?
\end{array}
\right].
$$
The candidates are $(6,8)$, $(5,3)$, $(3,5)$,$(2,6)$ all of which
vote for $s_8=1$.\\
Turning finally to $s_9$ we have
$$
\left[
\begin{array}{rrrrrrrrr}
4&\,\,2&\,\,2&\,\,1&\,\,1&\,\,1&\,\,3&\,\,3&\,\,?\\
3&\,\,4&\,\,4&\,\,2&\,\,2&\,\,2&\,\,1&\,\,?\\
3&\,\,4&\,\,4&\,\,2&\,\,2&\,\,2&\,\,?\\
3&\,\,4&\,\,4&\,\,2&\,\,2&\,\,?\\
1&\,\,3&\,\,3&\,\,4&\,\,?\\
3&\,\,4&\,\,4&\, \,?\\
1&\,\,3&\,\,?\\
1&\,\,?\\
?
\end{array}
\right].
$$
Candidates are $(8,2)$, $(7,3),\ldots , (2,8)$. All 
vote for $s_9=1$. \\
Finally
$$\bm{e}^T=\left[ \bm{g}_9 \cdots \bm{g}_1\right]\left(\begin{array}{c}
4\\3\\3\\3\\1\\3\\1\\1\\1\end{array}\right)=\left(\begin{array}{c}
0\\0\\0\\0\\0\\0\\0\\0\\1\end{array}\right)$$
as expected.
\end{example}

\section{Primary one-point algebraic geometric codes}\label{secfivefive}
In this section we are concerned with primary one-point algebraic
geometric codes and their improvements. We shall describe in detail the connection between
the order bound (Section~\ref{secfive}) for such codes 
and an almost similar bound in~\cite{MM}.

Following~\cite{MM} let 
$$\{\beta(1), \ldots , \beta(n)\}=\{ m \mid C_\Omega (D,mQ) \neq
C_\Omega (D,(m+1)Q)\}$$
with $\beta(1) > \cdots > \beta(n)$ and choose $\omega_i \in
\Omega(\beta(i)Q-D)$ such that $\nu_Q(\omega_i)=\beta(i)$, $i=1, \ldots ,
n$. Let
$$\{\alpha(1), \ldots ,\alpha(n)\}=\{m \mid C_{\mathcal{L}}(D,mQ) \neq
C_{\mathcal{L}}(D,(m-1)Q)\}$$
with $\alpha(1) < \cdots < \alpha(n)$ and choose $f_i \in
{\mathcal{L}}(\alpha(i)Q)\backslash {\mathcal{L}}((\alpha(i)-1)Q)$,
$i=1, \ldots , n$. This is consistent with the
notation in Definition~\ref{ovenfor}. We define
$$\bm{h}_i=({\textrm{res}}_{P_1}(\omega_i), \ldots
,{\textrm{res}}_{P_n}(\omega_i))$$
and ${\mathcal{H}}=\{\bm{h}_1, \ldots ,  \bm{h}_n \}$. Similarly we
define
$$\bm{g}_i=(f_i(P_1), \ldots , f_i(P_n))$$ and
${\mathcal{G}}=\{\bm{g}_1, \ldots , \bm{g}_n\}$. In this section we
shall always assume that ${\mathcal{H}}$ and ${\mathcal{G}}$ are
chosen as above. A standard result
tells us that 
\begin{equation}
C_{\mathcal{L}}(D,mQ)=C_\Omega(D,mQ)^\perp=C({\mathcal{G}},\widebar{I})=C^\perp({\mathcal{H}},I), \label{eqknox}
\end{equation}
where $I=\{1, \ldots , n-k\}$ and $\widebar{I}=\{1, \ldots , k\}$, for some $k$. Matsumoto
and Miura showed how to derive information on WB pairs with respect to
$({\mathcal{H}},{\mathcal{G}})$.
\begin{proposition}\label{prohallo}
Let $({\mathcal{H}},{\mathcal{G}})$ be as above. If $\beta(i)-\alpha(j)=\beta(k)$ then $(i,j)$ is WB with respect to
$({\mathcal{H}},{\mathcal{G}})$ and $\widebar{\rho}_{\mathcal{H}}(\bm{h}_i
\ast \bm{g}_j)=k$.
\qed
\end{proposition}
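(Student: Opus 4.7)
The plan is to identify $\bm{h}_i \ast \bm{g}_j$ with the residue vector of the global differential $f_j \omega_i$, determine its $Q$-adic valuation, and then read off both the $\widebar{\rho}_{\mathcal{H}}$-value and the WB property from the filtration $\{C_\Omega(D, mQ)\}_m$ that underlies $\mathcal{H}$. Since $f_j$ is regular at each $P_{i'}$, one has $\mathrm{res}_{P_{i'}}(f_j \omega_i) = f_j(P_{i'}) \mathrm{res}_{P_{i'}}(\omega_i)$, so $\bm{h}_i \ast \bm{g}_j$ is indeed the residue vector of $f_j \omega_i$. Its $Q$-valuation is $\nu_Q(f_j \omega_i) = \beta(i) - \alpha(j) = \beta(k)$, so $f_j \omega_i \in \Omega(\beta(k) Q - D)$ and its residue vector lies in $C_\Omega(D, \beta(k) Q)$. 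A short dimension count identifies $L_l = C_\Omega(D, \beta(l) Q)$ for the basis $\mathcal{H}$; because $\beta(l-1)$ and $\beta(l)$ are consecutive jumps of $m \mapsto C_\Omega(D, mQ)$, it follows that $L_{l-1} = C_\Omega(D, (\beta(l)+1) Q)$, so $\widebar{\rho}_{\mathcal{H}}(\bm{v}) \leq l$ is equivalent to $\bm{v} \in C_\Omega(D, \beta(l) Q)$.

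With this dictionary the WB property is immediate. Given $(u, v) \leq (i, j)$ with $(u, v) \neq (i, j)$, strict inequality in at least one coordinate together with $\beta$ decreasing and $\alpha$ increasing in the index forces $\nu_Q(f_v \omega_u) = \beta(u) - \alpha(v) > \beta(i) - \alpha(j) = \beta(k)$, i.e.\ $\nu_Q(f_v \omega_u) \geq \beta(k)+1$. Thus $f_v \omega_u \in \Omega((\beta(k)+1)Q - D)$, its residue vector lies in $L_{k-1}$, and $\widebar{\rho}_{\mathcal{H}}(\bm{h}_u \ast \bm{g}_v) \leq k-1 < k$, as required.

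The delicate point, and \textbf{the main obstacle}, is upgrading the easy bound $\widebar{\rho}_{\mathcal{H}}(\bm{h}_i \ast \bm{g}_j) \leq k$ to an equality. The plan is to use that the local quotient $\Omega(\beta(k)Q - D)/\Omega((\beta(k)+1)Q - D)$ has dimension at most one (a standard fact about valuations at a single place), and that it already contains the non-zero class of $\omega_k$. Hence $f_j \omega_i = c\, \omega_k + \omega''$ for some $c \in \lm{q} \setminus \{0\}$ and some $\omega'' \in \Omega((\beta(k)+1)Q - D)$; passing to residue vectors yields $\bm{h}_i \ast \bm{g}_j = c\, \bm{h}_k + \bm{w}$ with $\bm{w} \in L_{k-1}$. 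Since $\widebar{\rho}_{\mathcal{H}}(\bm{h}_k) = k$ by construction of $\mathcal{H}$, the required equality follows.
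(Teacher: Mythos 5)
Your proof is correct. The paper states Proposition~\ref{prohallo} without proof, attributing it to Matsumoto and Miura \cite{MM}; your argument --- identifying $\bm{h}_i \ast \bm{g}_j$ with the residue vector of $f_j\omega_i$, computing $\nu_Q(f_j\omega_i)=\beta(i)-\alpha(j)$, matching the filtration $L_l$ with $C_\Omega(D,\beta(l)Q)$, and using that $\Omega(\beta(k)Q-D)/\Omega((\beta(k)+1)Q-D)$ is at most one-dimensional to upgrade $\leq k$ to equality --- is exactly the standard differential-and-residue derivation underlying the cited source, carried out correctly and in full.
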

We now derive an alternative formulation:
\begin{proposition}\label{prolovende2}
Let $({\mathcal{H}},{\mathcal{G}})$ be as above.
If $\alpha(i)+\alpha(j)=\alpha(k)$ then $(n-k+1,j)$ is WB with respect
to $({\mathcal{H}},{\mathcal{G}})$ 
and $\widebar{\rho}_{\mathcal{H}}(\bm{h}_{n-k+1} \ast \bm{g}_j)=n-i+1$. 
\end{proposition}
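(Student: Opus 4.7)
The plan is to reduce Proposition~\ref{prolovende2} to Proposition~\ref{prohallo} by means of a reflection identity between the $\alpha$ and $\beta$ sequences defined in this section. First I would establish the correspondence
$$\beta(i) = \alpha(n-i+1) - 1, \quad i=1, \ldots, n.$$
This is a standard consequence of the duality $C_\Omega(D, mQ) = C_{\mathcal{L}}(D, mQ)^\perp$: since the dimensions of these two codes sum to $n$, a strict inclusion $C_{\mathcal{L}}(D, (m-1)Q) \subsetneq C_{\mathcal{L}}(D, mQ)$ is equivalent to the strict inclusion $C_\Omega(D, mQ) \subsetneq C_\Omega(D, (m-1)Q)$. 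Thus $m \in \{\alpha(1), \ldots, \alpha(n)\}$ if and only if $m-1 \in \{\beta(1), \ldots, \beta(n)\}$, and the monotonicity conventions ($\alpha$'s increasing, $\beta$'s decreasing) force the displayed reflection formula.

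With this identity in hand I would substitute $(i',j',k') = (n-k+1,\, j,\, n-i+1)$ into the hypothesis of Proposition~\ref{prohallo}. The required identity $\beta(i') - \alpha(j') = \beta(k')$ then reads
$$(\alpha(k) - 1) - \alpha(j) = \alpha(i) - 1,$$
which, after cancelling the constants, is precisely the hypothesis $\alpha(i) + \alpha(j) = \alpha(k)$ of Proposition~\ref{prolovende2}. Proposition~\ref{prohallo} therefore gives that $(n-k+1, j)$ is WB with respect to $({\mathcal{H}}, {\mathcal{G}})$ and that $\widebar{\rho}_{\mathcal{H}}(\bm{h}_{n-k+1} \ast \bm{g}_j) = n-i+1$, which is the desired conclusion.

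The only step requiring any thought is the derivation of the reflection formula, and this amounts to a straightforward dimension-counting argument based on the duality $C_\Omega(D,mQ)=C_{\mathcal L}(D,mQ)^{\perp}$; the rest of the proof is purely formal substitution. In particular, no further input from the theory of differentials or residues is needed, since all such content has already been absorbed into Proposition~\ref{prohallo}.
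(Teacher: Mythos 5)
Your proof is correct and follows essentially the same route as the paper: the paper likewise observes the reflection identity $\beta(i)=\alpha(n-i+1)-1$ and then rearranges the hypothesis of Proposition~\ref{prohallo} under the substitution $i'=n-k+1$, $k'=n-i+1$. The only difference is that you spell out the dimension-counting justification of the reflection formula, which the paper states without proof; your substitution check is accurate.
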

\begin{proof}
We first observe that $\beta(i)=\alpha(n-i+1)-1$ holds. Rearranging
the condition in Proposition~\ref{prohallo} and
substituting $i^\prime=n-k+1$ and $k^\prime=n-i+1$ we arrive at the
proposition, except we have $i^\prime$ and $k^\prime$ where the
proposition has $i$ and $k$.
\qed
\end{proof}

Proposition~\ref{prolovende2} suggests the following
equivalent
formulation of
Matsumoto and Miura's bound on codes from algebraic function fields of
transcendence degree one \cite[Secs.\ 3 \& 4]{MM}.
Observe that differentials and residues
are completely removed from the proof, while such removal
had been completely unimaginable to the second author.

\begin{proposition}\label{pronoget}
Let $({\mathcal{H}},{\mathcal{G}})$ be as above,
and $I$ an arbitrarily chosen subset of $\{1$, \ldots, $n\}$.
The minimum distance of $C^\perp({\mathcal{H}},I)$ is at least
$$\min \{ \sigma (\alpha(i)) \mid i \in \widebar{I} \},$$
where $\sigma$ is as in Definition~\ref{defmusique}.
In particular the minimum distance of the one-point algebraic
geometric code in~(\ref{eqknox}) is at least
$$\min \{ \sigma (\alpha(i)) \mid i=1, \ldots , k\}.$$
\end{proposition}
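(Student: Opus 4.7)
The plan is to apply the Feng-Rao bound for dual codes (\ref{bounddual3}) to the bases $({\mathcal{H}},{\mathcal{G}})$, and then to show that for every $l \notin I$ the count $\widebar{\mu}^{\textrm{WB}}_{({\mathcal{H}},{\mathcal{G}})}(l)$ is at least $\sigma(\alpha(n-l+1))$. Since $l \notin I$ if and only if $n-l+1 \in \widebar{I}$, taking the minimum of this inequality over $l \notin I$ delivers exactly the promised lower bound on $d(C^\perp({\mathcal{H}},I))$.

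To establish the inequality, fix $l \notin I$ and set $i_0 = n-l+1$. By Definition~\ref{defmusique}, $\sigma(\alpha(i_0))$ is the number of $k \in \{1,\ldots,n\}$ for which there exists $\gamma \in \Gamma$ with $\alpha(k) = \alpha(i_0) + \gamma$. Theorem~\ref{thesnabel} tells us that any such $\gamma$ is forced to be of the form $\alpha(j)$ for some $j \in \{1,\ldots,n\}$, so $\sigma(\alpha(i_0))$ equals the cardinality of $\{k \mid \alpha(i_0)+\alpha(j)=\alpha(k) \text{ for some } j\}$. For each such $k$ pick a witness $j_k$; Proposition~\ref{prolovende2} then asserts that $(n-k+1,j_k)$ is WB with respect to $({\mathcal{H}},{\mathcal{G}})$ and that $\widebar{\rho}_{\mathcal{H}}(\bm{h}_{n-k+1}\ast\bm{g}_{j_k}) = n-i_0+1 = l$. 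Distinct values of $k$ produce distinct first coordinates $n-k+1$, so they contribute distinctly to the count $\widebar{\mu}^{\textrm{WB}}_{({\mathcal{H}},{\mathcal{G}})}(l)$, giving
\[
\widebar{\mu}^{\textrm{WB}}_{({\mathcal{H}},{\mathcal{G}})}(l) \;\geq\; \sigma(\alpha(n-l+1)).
\]

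Plugging this into (\ref{bounddual3}) and rewriting the minimum using the bijection $l \leftrightarrow n-l+1$ between $\{1,\ldots,n\}\setminus I$ and $\widebar{I}$ yields
\[
d(C^\perp({\mathcal{H}},I)) \;\geq\; \min\{\widebar{\mu}^{\textrm{WB}}_{({\mathcal{H}},{\mathcal{G}})}(l) \mid l \notin I\} \;\geq\; \min\{\sigma(\alpha(i)) \mid i \in \widebar{I}\}.
\]
The specialization to one-point algebraic geometric codes $C_{\mathcal{L}}(D,mQ)$ then follows at once from (\ref{eqknox}), for which $\widebar{I}=\{1,\ldots,k\}$. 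The only step that demands any care is the book-keeping with the index reversal $i \mapsto n-i+1$: Proposition~\ref{prolovende2} was engineered so that the first coordinate of the WB pair and the $\widebar{\rho}_{\mathcal{H}}$-value both appear in reversed form, and this has to mesh cleanly with the complementation convention $\widebar{I} = \{1,\ldots,n\}\setminus\{n-i+1 \mid i \in I\}$ from Section~\ref{secthree}. Once those two reversals are seen to cancel, the argument is essentially immediate, and no appeal to differentials or residues is required.
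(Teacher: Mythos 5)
Your proposal is correct and follows essentially the same route as the paper: the paper's own proof consists precisely of the observation that $i \in \widebar{I} \Leftrightarrow n-i+1 \in \{1,\ldots,n\}\setminus I$ followed by an appeal to Proposition~\ref{prolovende2} and the bound~(\ref{bounddual3}). You have merely filled in the intermediate bookkeeping (the use of Theorem~\ref{thesnabel} to force $\gamma=\alpha(j)$, the resulting inequality $\widebar{\mu}^{\textrm{WB}}_{({\mathcal{H}},{\mathcal{G}})}(l)\geq\sigma(\alpha(n-l+1))$, and the index reversal), all of which is accurate.
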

\begin{proof}
Observe that
$$i \in \widebar{I} \Leftrightarrow n-i+1 \in \{1, \ldots , n\} \backslash
I.$$ We now apply Proposition~\ref{prolovende2} and
(\ref{bounddual3}). \qed
\end{proof}

Proposition~\ref{pronoget} and Theorem~\ref{theorderbounds}
produce the same bound for the codes in~(\ref{eqknox}). Note that
$({\mathcal{H}},{\mathcal{G}})$ satisfies the condition
in~(\ref{eq77}), but in general not the condition
in~(\ref{dualbases}). Therefore by Remark~\ref{rem77},
$C({\mathcal{G}},\widebar{I})$ is not always equal to $C^\perp
({\mathcal{H}},I)$. However, the order bound (Theorem \ref{theorderbounds})
for primary codes and
Proposition~\ref{pronoget} produce the same estimates on the minimum
distances of the two codes, and their dimensions are
the same.

\section{Estimation of generalized Hamming weights}\label{secsix}
As explained in Section~\ref{secthree} the correspondences in Lemma~\ref{lemlom1} and \ref{lemlom2} imply
that the Feng-Rao estimates  for the minimum distances of $C({\mathcal{G}},I)$ and
$C^\perp({\mathcal{H}},\widebar{I})$ are the same (although as
demonstrated in Section~\ref{secfour}, they may be
easier to derive from one of the code descriptions than the other). This is
Theorem~\ref{kjlj}. A similar result holds regarding generalized
Hamming weights which will be clear once we have stated the Feng-Rao
bounds for these parameters. Recall that the support of a set $S
\subseteq {\mathbb{F}}_{q}^{n}$ is defined by
$$\Supp(S)= \{ i \mid c_i \neq 0 {\mbox{ \ for some \ }} \bm{c}=(c_1, \ldots ,c_n)
\in S\}.$$
The $t$th generalized Hamming weight of a code $C$ is defined by
$$d_t(C)=\min\{ \sharp \Supp(S) \mid S {\mbox { is a linear subcode of \
  }} C {\mbox{ \ of dimension \ }} t \}.$$
Clearly $d_1(C)$ is just the well-known minimum distance. To introduce
the Feng-Rao bounds we will need the following two definitions. The latter
is a 
slight modification of~\cite[Def.~ 6]{geithom}.

\begin{definition}\label{defuv}
Given bases ${\mathcal{B}}, {\mathcal{U}}$ consider for $l=1, 2, \ldots ,n$ the following sets
\begin{eqnarray}
{\mathrm{V}}_{({\mathcal{B}},{\mathcal{U}})}^{{\textrm{WB}}}(l)&=&\{i \in \{1, \ldots , n\} \mid  \widebar{\rho}_{\mathcal{B}}(\bm{b}_i \ast \bm{u}_j)=l {\mbox{ \ for some \ }}
\bm{u}_j \in {\mathcal{U}}   {\mbox{\ with \ }}  (i,j) {\mbox{\   WB
}}\}, \nonumber \\
\Lambda_{({\mathcal{B}},{\mathcal{U}})}^{{\textrm{WB}}}(i)&=&\{ l \in \{1, \ldots , n\} \mid \widebar{\rho}_{\mathcal{B}}(\bm{b}_{i}\ast \bm{u}_j) = l \ {\mbox{for
    some}} \   \bm{u}_j \in {\mathcal{U}} {\mbox{\ with \ }} \ (i,j)
{\mbox{\  WB}}\}. \nonumber 
\end{eqnarray}
\end{definition}

\begin{definition}\label{defmusic}
For $1 \leq l_1 < \cdots < l_t \leq n$ define
\begin{eqnarray}
\widebar{\mu}_{({\mathcal{B}},{\mathcal{U}})}^{{\textrm{WB}}} (l_1, \ldots l_t)&=&\sharp \left( \cup_{s=1, \ldots ,
  t}{\mathrm{V}}_{({\mathcal{B}},{\mathcal{U}})}^{{\textrm{WB}}} ({l_s})\right)  \nonumber 
\end{eqnarray}
and for $1 \leq i_1< \cdots < i_t
\leq n$
 define
\begin{eqnarray}
\widebar{\sigma}_{({\mathcal{B}},{\mathcal{U}})}^{{\textrm{WB}}} 
( i_1, \ldots,  i_t
)&=&\sharp \left( \cup_{s=1,     \ldots , t}  \Lambda_{({\mathcal{B}},{\mathcal{U}})}^{{\textrm{WB}}}  ({i_s})\right). \nonumber
\end{eqnarray} 
\end{definition}

${\mathrm{V}}_{({\mathcal{B}},{\mathcal{U}})}^{{\textrm{OWB}}}$,  
$\Lambda_{({\mathcal{B}},{\mathcal{U}})}^{{\textrm{OWB}}}$,
$\widebar{\mu}_{({\mathcal{B}},{\mathcal{U}})}^{\textrm{WWB}}$ and  $\widebar{\sigma}_{({\mathcal{B}},{\mathcal{U}})}^{\textrm{WWB}}$ are defined similarly, but with WB
replaced with OWB. The Feng-Rao bounds for generalized Hamming weights are as follows:

\begin{theorem}\label{egensaet1}
Let $I$ be fixed. \\
For 
$1 \leq t \leq n- \sharp I$ we have
\begin{eqnarray}
&&d_t(C^{\perp}({\mathcal{B}},I))\nonumber \\
&& {\mbox{ \ \ \ \ }} \geq 
\min\{\widebar{\mu}_{{{({\mathcal{B}},{\mathcal{U}})}}}^{{\textrm{OWB}}}(l_1,
\ldots, l_t) \mid 1 \leq l_1 < \cdots < l_t \leq n {\mbox{ \ and \ }}
 l_1, \ldots , l_t \notin I \}\nonumber \\
&& {\mbox{ \ \ \ \ }} \geq  \min\{\widebar{\mu}_{({\mathcal{B}},{\mathcal{U}})}^{{\textrm{WB}}}(l_1,
\ldots, l_t) \mid 1 \leq l_1 < \cdots < l_t \leq n {\mbox{\  and \ }}
 l_1, \ldots , l_t \notin I \}.\nonumber 
\end{eqnarray}
For $1 \leq t \leq \sharp I$ we have
\begin{eqnarray}
&&d_t(C({\mathcal{B}},I))\nonumber \\
&&{\mbox{ \ \ \ \ }} \geq
\min\{\widebar{\sigma}_{({\mathcal{B}},{\mathcal{U}})}^{{\textrm{OWB}}}(i_1, \ldots,
i_t) \mid 1 \leq i_1 < \cdots < i_t \leq n
\ {\mbox{and}} \  i_1, \ldots , i_t  \in I \} \} \nonumber  \\
&&{\mbox{ \ \ \ \ }} \geq \min\{\widebar{\sigma}_{({\mathcal{B}},{\mathcal{U}})}^{{\textrm{WB}}}(i_1,
\ldots, i_t) \mid 1 \leq i_1<\cdots < i_t \leq n
\ {\mbox{and}} \  i_1, \ldots , i_t \in I \} \}. \nonumber  
\end{eqnarray}  
\end{theorem}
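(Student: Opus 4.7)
The plan is to imitate and extend the argument establishing the minimum distance bounds of Theorem~\ref{thebounds}. Since every WB pair is OWB (Remark~\ref{remrom}), we have $\mathrm{V}^{\textrm{WB}}_{(\mathcal{B},\mathcal{U})}(l) \subseteq \mathrm{V}^{\textrm{OWB}}_{(\mathcal{B},\mathcal{U})}(l)$ and $\Lambda^{\textrm{WB}}_{(\mathcal{B},\mathcal{U})}(i) \subseteq \Lambda^{\textrm{OWB}}_{(\mathcal{B},\mathcal{U})}(i)$, so $\widebar{\mu}^{\textrm{OWB}} \geq \widebar{\mu}^{\textrm{WB}}$ and $\widebar{\sigma}^{\textrm{OWB}} \geq \widebar{\sigma}^{\textrm{WB}}$ on every tuple; hence the OWB estimate in each pair of inequalities implies the WB estimate. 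It therefore suffices to establish the sharper OWB bounds.

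For the dual-code bound I would fix a $t$-dimensional linear subspace $D \subseteq C^{\perp}(\mathcal{B}, I)$ and proceed in three steps. First, I apply Gaussian elimination to the $t \times n$ syndrome matrix whose rows are $(\bm{c} \cdot \bm{b}_{1}, \ldots, \bm{c} \cdot \bm{b}_{n})$ for $\bm{c}$ in a basis of $D$, producing a new basis $\bm{c}_{1}, \ldots, \bm{c}_{t}$ of $D$ in reduced row-echelon form with pivot columns $l_{1} < \cdots < l_{t}$; the orthogonality conditions $\bm{c}_{s} \cdot \bm{b}_{i} = 0$ for $i \in I$ force $l_{s} \notin I$. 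Second, I record the inequality
$$|\Supp(D)| \;\geq\; \mathrm{rank}\bigl((\bm{c}_{s} \cdot (\bm{b}_{i} \ast \bm{u}_{j}))_{(s,j),\, i}\bigr),$$
which holds because this matrix factors as $A\, B$, where the rows of $A$ are the vectors $(\bm{c}_{s} \ast \bm{u}_{j})^{T}$ (whose supports lie in $\Supp(D)$) while $B = [\bm{b}_{1} | \cdots | \bm{b}_{n}]$ is invertible. Third, and most critically, I would lower-bound this rank by $\widebar{\mu}^{\textrm{OWB}}_{(\mathcal{B},\mathcal{U})}(l_{1}, \ldots, l_{t}) = \bigl| \bigcup_{s} \mathrm{V}^{\textrm{OWB}}_{(\mathcal{B},\mathcal{U})}(l_{s}) \bigr|$ by exhibiting a non-singular square submatrix of that size: for each $i$ in $\bigcup_{s} \mathrm{V}^{\textrm{OWB}}_{(\mathcal{B},\mathcal{U})}(l_{s})$ I pick a witness $s(i)$ and $j(i)$ with $(i, j(i))$ OWB and $\widebar{\rho}_{\mathcal{B}}(\bm{b}_{i} \ast \bm{u}_{j(i)}) = l_{s(i)}$, and then show that the submatrix indexed by rows $(s(i), j(i))$ and columns $i$ becomes upper-triangular with non-vanishing diagonal under a suitable ordering --- the OWB condition kills contributions from strictly smaller row indices, while the reduced-echelon normalization kills spurious contributions at the pivot columns $l_{r}$ for $r \neq s(i)$.

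For the primary-code bound, I transport the dual bound through the correspondence of Section~\ref{secthree}. Setting $\mathcal{G} = \mathcal{B}$ and letting $\mathcal{H}$ be the unique basis satisfying~(\ref{dualbases}), we have $C(\mathcal{B}, I) = C^{\perp}(\mathcal{H}, \widebar{I})$. Applying the dual-code bound just established yields a lower bound on $d_{t}(C(\mathcal{B}, I))$ in terms of $\min \widebar{\mu}^{\textrm{OWB}}_{(\mathcal{H},\mathcal{U})}(l_{1}, \ldots, l_{t})$ over tuples with $l_{s} \notin \widebar{I}$. Lemma~\ref{lemlom2} supplies the set identity $\Lambda^{\textrm{OWB}}_{(\mathcal{G},\mathcal{U})}(i) = \{n - m + 1 : m \in \mathrm{V}^{\textrm{OWB}}_{(\mathcal{H},\mathcal{U})}(n - i + 1)\}$; combined with $l \notin \widebar{I} \Leftrightarrow n - l + 1 \in I$ and the substitution $i_{s} = n - l_{t - s + 1} + 1$, this rewrites the dual estimate as $\min \widebar{\sigma}^{\textrm{OWB}}_{(\mathcal{G},\mathcal{U})}(i_{1}, \ldots, i_{t})$ over $i_{1} < \cdots < i_{t}$ in $I$, which is the sought primary bound.

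The main obstacle will be the third step of the dual argument --- the rank lower bound. The case $t = 1$ is already the technical heart of \cite[Th.~1]{geithom}, and the extension to $t > 1$ is delicate because the witnesses $(i, j(i))$ must be chosen coherently across different values of $s(i)$. The reduced row-echelon normalization of the syndrome rows plays a crucial role here: without the extra vanishing of $\bm{c}_{s} \cdot \bm{b}_{l_{r}}$ for $r \neq s$, mutual interference among the $\bm{c}_{s}$ would spoil the triangular structure required to certify the non-vanishing of the diagonal entries of the candidate submatrix.
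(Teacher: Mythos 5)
Your proposal is correct, but it is worth noting that the paper itself gives no argument for this theorem: its ``proof'' is a citation of Heijnen--Pellikaan (Th.~3.14) and Geil--Thommesen (Th.~1), where the dual and primary bounds are each established directly. Your treatment of the dual half follows the same strategy as those references (echelon-normalize a basis of the $t$-dimensional subcode $D$ so the pivots $l_1<\cdots<l_t$ lie outside $I$, bound $\sharp\Supp(D)$ below by the rank of the matrix $\bigl(\bm{c}_s\cdot(\bm{b}_i\ast\bm{u}_j)\bigr)$ via the factorization through the invertible matrix of the $\bm{b}_i$, and exhibit a triangular submatrix indexed by $\bigcup_s V^{\textrm{OWB}}(l_s)$). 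Your primary half, however, takes a genuinely different route: instead of a direct argument you transport the dual bound through the duality of Section~\ref{secthree}, using $C(\mathcal{B},I)=C^\perp(\mathcal{H},\widebar{I})$ and the set identity from Lemma~\ref{lemlom2}; this is exactly the equivalence the paper records as the \emph{final} theorem of Section~\ref{secsix}, run in the opposite direction, and it buys you a self-contained proof of both halves from a single rank computation, at the price of invoking the basis $\mathcal{H}$. One small remark on your third step: the extension to $t>1$ is less delicate than you fear. Ordering the chosen witnesses by increasing column index $i$, the entry in row $(s(i),j(i))$ and column $i'<i$ already vanishes because OWB forces $\widebar{\rho}_{\mathcal{B}}(\bm{b}_{i'}\ast\bm{u}_{j(i)})<l_{s(i)}$ while $\bm{c}_{s(i)}$ annihilates $L_{l_{s(i)}-1}$ by the pivot condition alone; the \emph{reduced} echelon normalization (vanishing at the other pivots $l_r$, $r\neq s(i)$) is never needed, since entries in columns $i'>i$ play no role in the triangularity. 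Also note that distinct $i$ automatically yield distinct row labels $(s(i),j(i))$, as coincidence would contradict OWB, so the exhibited block really is a submatrix.
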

\begin{proof}
See \cite[Th.~3.14]{heijnenpellikaan} and \cite[Th.~1]{geithom}.
\qed
\end{proof}

Applying Lemma~\ref{lemlom1} and \ref{lemlom2} we get:

\begin{theorem}
Assume ${\mathcal{G}},{\mathcal{H}}$ satisfy
condition~(\ref{dualbases}) and that ${\mathcal{U}}$ is any basis. Let a non-empty set $I \subseteq \{1, 2,
\ldots , n\}$ be given. Then 
$C({\mathcal{G}},I)=C^\perp({\mathcal{H}},\widebar{I})$ and for $t \leq \sharp
I$ we have
\begin{eqnarray}
\min \{ \widebar{\mu}_{({\mathcal{H}},{\mathcal{U}})}^{\textrm{OWB}}(l_1,\ldots , l_t)
\mid 1 \leq l_1 < \cdots < l_t \leq n {\mbox{ \ and \ }} l_1, \ldots
, l_t \notin \widebar{I} \} {\mbox{ \ \ \ \ }}\nonumber \\
=\min\{\widebar{\sigma}_{({\mathcal{G}},{\mathcal{U}})}^{\textrm{OWB}}(i_1, \ldots , i_t)
\mid  1 \leq i_1<\cdots < i_t \leq n  {\mbox{ \ and \ }} i_1, \ldots ,i_t \in
I\}, \nonumber 
\end{eqnarray}
\begin{eqnarray}
\min \{ \widebar{\mu}_{({\mathcal{H}},{\mathcal{U}})}^{\textrm{WB}}(l_1,\ldots , l_t)
\mid 1 \leq l_1 < \cdots < l_t \leq n {\mbox{ \ and \ }} l_1, \ldots
, l_t \notin
\widebar{I} \} {\mbox{ \ \ \ \ }}\nonumber \\
=\min\{\widebar{\sigma}_{({\mathcal{G}},{\mathcal{U}})}^{\textrm{WB}}(i_1, \ldots , i_t)
\mid  1 \leq i_1<\cdots < i_t \leq n  {\mbox{ \ and \ }}i_1, \ldots ,i_t \in I\}. \nonumber 
\end{eqnarray}
\qed
\end{theorem}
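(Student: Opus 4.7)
The plan is to lift the index-wise bijections implicit in Lemmas~\ref{lemlom1} and~\ref{lemlom2} first to the level of the sets in Definition~\ref{defuv}, and then to the level of the unions in Definition~\ref{defmusic}. First I would verify that, for each $i$, the involution $l \mapsto n-l+1$ restricts to a bijection between $\Lambda_{(\mathcal{G},\mathcal{U})}^{\textrm{WB}}(i)$ and $\mathrm{V}_{(\mathcal{H},\mathcal{U})}^{\textrm{WB}}(n-i+1)$. Indeed, $l$ lies in the first set exactly when some $\bm{u}_j \in \mathcal{U}$ satisfies $\widebar{\rho}_{\mathcal{G}}(\bm{g}_i \ast \bm{u}_j)=l$ together with $(i,j)$ WB with respect to $(\mathcal{G},\mathcal{U})$, which by Lemma~\ref{lemlom1} is equivalent to $\widebar{\rho}_{\mathcal{H}}(\bm{h}_{n-l+1} \ast \bm{u}_j)=n-i+1$ with $(n-l+1,j)$ WB with respect to $(\mathcal{H},\mathcal{U})$, i.e.\ to $n-l+1 \in \mathrm{V}_{(\mathcal{H},\mathcal{U})}^{\textrm{WB}}(n-i+1)$. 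Lemma~\ref{lemlom2} yields the entirely analogous statement with OWB in place of WB.

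Next I would pass to unions. For any $1 \leq i_1 < \cdots < i_t \leq n$, the same involution $l \mapsto n-l+1$ produces a bijection
\[
\bigcup_{s=1}^{t} \Lambda_{(\mathcal{G},\mathcal{U})}^{\textrm{WB}}(i_s) \;\longleftrightarrow\; \bigcup_{s=1}^{t} \mathrm{V}_{(\mathcal{H},\mathcal{U})}^{\textrm{WB}}(n-i_s+1).
\]
Taking cardinalities, and reindexing the right-hand union so that its arguments appear in increasing order, this gives
\[
\widebar{\sigma}_{(\mathcal{G},\mathcal{U})}^{\textrm{WB}}(i_1,\ldots,i_t)=\widebar{\mu}_{(\mathcal{H},\mathcal{U})}^{\textrm{WB}}(n-i_t+1,\ldots,n-i_1+1),
\]
and likewise in the OWB case.

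Finally, by the very definition of $\widebar{I}$ one has $i \in I \Leftrightarrow n-i+1 \notin \widebar{I}$, so the map $(i_1,\ldots,i_t) \mapsto (n-i_t+1,\ldots,n-i_1+1)$ is a bijection between strictly increasing $t$-subsets of $I$ and strictly increasing $t$-subsets of $\{1,\ldots,n\}\setminus \widebar{I}$. Minimising the matched equal quantities on each side then produces the two displayed equalities; the underlying identity $C(\mathcal{G},I)=C^\perp(\mathcal{H},\widebar{I})$ was already recorded in Theorem~\ref{kjlj}. Since Lemmas~\ref{lemlom1} and~\ref{lemlom2} absorb all the substantive algebra, the only potential obstacle is the careful bookkeeping of the index reversal $l \leftrightarrow n-l+1$ and the induced reindexing of the tuples, which is routine.
\qed
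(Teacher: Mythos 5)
Your argument is correct and is exactly the route the paper takes: the paper's entire justification is the lead-in ``Applying Lemma~\ref{lemlom1} and \ref{lemlom2} we get,'' and your proposal simply spells out the bookkeeping (the bijection $l\mapsto n-l+1$ between $\Lambda_{(\mathcal{G},\mathcal{U})}(i)$ and $\mathrm{V}_{(\mathcal{H},\mathcal{U})}(n-i+1)$, its lift to unions, and the correspondence $i\in I\Leftrightarrow n-i+1\notin\widebar{I}$) that the authors leave implicit. No discrepancies.
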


In other words, also in the setting of generalized Hamming weights the Feng-Rao
bound for primary codes and the Feng-Rao bound for dual codes are
consequences of each other.

\section*{Acknowledgments}
The present work was done while Ryutaroh Matsumoto was visiting Aalborg
University as a Velux Visiting Professor supported by the Villum
Foundation. The authors gratefully acknowledge this support. The authors also gratefully acknowledge the support from
the Danish National Research Foundation and the National Science
Foundation of China (Grant No.\ 11061130539) for the Danish-Chinese
Center for Applications of Algebraic Geometry in Coding Theory and
Cryptography. Furthermore the authors are thankful for the support from Spanish
grant MTM2007-64704 and for 
 the MEXT Grant-in-Aid for Scientific Research (A) No.\ 23246071.


\begin{thebibliography}{35}
\expandafter\ifx\csname natexlab\endcsname\relax\def\natexlab#1{#1}\fi
\providecommand{\url}[1]{\texttt{#1}}
\providecommand{\href}[2]{#2}
\providecommand{\path}[1]{#1}
\providecommand{\DOIprefix}{doi:}
\providecommand{\ArXivprefix}{arXiv:}
\providecommand{\URLprefix}{URL: }
\providecommand{\Pubmedprefix}{pmid:}
\providecommand{\doi}[1]{\href{http://dx.doi.org/#1}{\path{#1}}}
\providecommand{\Pubmed}[1]{\href{pmid:#1}{\path{#1}}}
\providecommand{\bibinfo}[2]{#2}
\ifx\xfnm\relax \def\xfnm[#1]{\unskip,\space#1}\fi
\bibitem[{Andersen and Geil(2008)}]{AG}
\bibinfo{author}{H.E. Andersen}, \bibinfo{author}{O.~Geil},
  \bibinfo{title}{Evaluation codes from order domain theory},
  \bibinfo{journal}{Finite Fields Appl.} \bibinfo{volume}{14}
  (\bibinfo{year}{2008}) \bibinfo{pages}{92--123}.
  \DOIprefix\doi{10.1016/j.ffa.2006.12.004}.
\bibitem[{Beelen(2007)}]{Beelen}
\bibinfo{author}{P.~Beelen}, \bibinfo{title}{The order bound for general
  algebraic geometric codes}, \bibinfo{journal}{Finite Fields Appl.}
  \bibinfo{volume}{13} (\bibinfo{year}{2007}) \bibinfo{pages}{665--680}.
  \DOIprefix\doi{10.1016/j.ffa.2006.09.006}.
\bibitem[{Beelen and H{\o}holdt(2008)}]{petertom}
\bibinfo{author}{P.~Beelen}, \bibinfo{author}{T.~H{\o}holdt},
  \bibinfo{title}{The decoding of algebraic geometry codes}, in:
  \bibinfo{editor}{E.~Mart\'inez-Moro}, \bibinfo{editor}{C.~Munuera},
  \bibinfo{editor}{D.~Ruano} (Eds.), \bibinfo{booktitle}{Advances in algebraic
  geometry codes}, volume~\bibinfo{volume}{5} of
  \textit{\bibinfo{series}{Coding Theory and Cryptology}},
  \bibinfo{publisher}{World Scientific}, \bibinfo{address}{Singapore},
  \bibinfo{year}{2008}, pp. \bibinfo{pages}{49--98}.
  \DOIprefix\doi{10.1142/9789812794017_0002}.
\bibitem[{Duursma(1993{\natexlab{a}})}]{D19}
\bibinfo{author}{I.M. Duursma}, \bibinfo{title}{Decoding codes from curves and
  cyclic codes}, Ph.D. thesis, Eindhoven University of Technology,
  \bibinfo{year}{1993}{\natexlab{a}}. \URLprefix
  \url{http://www.math.uiuc.edu/~duursma/pub/frame.pdf}.
\bibitem[{Duursma(1993{\natexlab{b}})}]{D18}
\bibinfo{author}{I.M. Duursma}, \bibinfo{title}{Majority coset decoding},
  \bibinfo{journal}{IEEE Trans. Inform. Theory} \bibinfo{volume}{39}
  (\bibinfo{year}{1993}{\natexlab{b}}) \bibinfo{pages}{1067--1070}.
  \DOIprefix\doi{10.1109/18.256518}.
\bibitem[{Duursma(1994)}]{duursma94}
\bibinfo{author}{I.M. Duursma}, \bibinfo{title}{On erasure decoding of
  {AG}-codes}, in: \bibinfo{booktitle}{Proc. 1994 IEEE Information Theory
  Workshop}, \bibinfo{address}{Moscow, Russia}, \bibinfo{year}{1994}.
  \URLprefix \url{http://www.math.uiuc.edu/~duursma/pub/Erasure94.pdf}.
\bibitem[{Duursma and Kirov(2009)}]{DK}
\bibinfo{author}{I.M. Duursma}, \bibinfo{author}{R.~Kirov}, \bibinfo{title}{An
  extension of the order bound for {AG} codes}, in:
  \bibinfo{editor}{M.~Bras-Amor\'os}, \bibinfo{editor}{T.~H{\o}holdt} (Eds.),
  \bibinfo{booktitle}{Applied Algebra, Algebraic Algorithms and
  Error-Correcting Codes}, volume \bibinfo{volume}{5527} of
  \textit{\bibinfo{series}{Lecture Notes in Computer Science}},
  \bibinfo{publisher}{Springer}, \bibinfo{year}{2009}, pp.
  \bibinfo{pages}{11--22}. \DOIprefix\doi{10.1007/978-3-642-02181-7_2}.
\bibitem[{Duursma et~al.(2011)Duursma, Kirov and Park}]{DKP}
\bibinfo{author}{I.M. Duursma}, \bibinfo{author}{R.~Kirov},
  \bibinfo{author}{S.~Park}, \bibinfo{title}{Distance bounds for algebraic
  geometric codes}, \bibinfo{journal}{J. Pure Appl. Algebra}
  \bibinfo{volume}{215} (\bibinfo{year}{2011}) \bibinfo{pages}{1863--1878}.
  \DOIprefix\doi{10.1016/j.jpaa.2010.10.018}.
\bibitem[{Feng and Rao(1993)}]{FR24}
\bibinfo{author}{G.L. Feng}, \bibinfo{author}{T.R.N. Rao},
  \bibinfo{title}{Decoding algebraic-geometric codes up to the designed minimum
  distance}, \bibinfo{journal}{IEEE Trans. Inform. Theory} \bibinfo{volume}{39}
  (\bibinfo{year}{1993}) \bibinfo{pages}{37--45}.
  \DOIprefix\doi{10.1109/18.179340}.
\bibitem[{Feng and Rao(1994)}]{FR1}
\bibinfo{author}{G.L. Feng}, \bibinfo{author}{T.R.N. Rao}, \bibinfo{title}{A
  simple approach for construction of algebraic-geometric codes from affine
  plane curves}, \bibinfo{journal}{IEEE Trans. Inform. Theory}
  \bibinfo{volume}{40} (\bibinfo{year}{1994}) \bibinfo{pages}{1003--1012}.
  \DOIprefix\doi{10.1109/18.335972}.
\bibitem[{Feng and Rao(1995)}]{FR2}
\bibinfo{author}{G.L. Feng}, \bibinfo{author}{T.R.N. Rao},
  \bibinfo{title}{Improved geometric {G}oppa codes part {I}: Basic theory},
  \bibinfo{journal}{IEEE Trans. Inform. Theory} \bibinfo{volume}{41}
  (\bibinfo{year}{1995}) \bibinfo{pages}{1678--1693}.
  \DOIprefix\doi{10.1109/18.476241}.
\bibitem[{Fitzgerald and Lax(1998)}]{lax}
\bibinfo{author}{J.~Fitzgerald}, \bibinfo{author}{R.F. Lax},
  \bibinfo{title}{Decoding affine variety codes using {G}r\"obner bases},
  \bibinfo{journal}{Des. Codes Cryptogr.} \bibinfo{volume}{13}
  (\bibinfo{year}{1998}) \bibinfo{pages}{147--158}.
  \DOIprefix\doi{10.1023/A:1008274212057}.
\bibitem[{Galindo and Monserrat(2009)}]{galindo2}
\bibinfo{author}{C.~Galindo}, \bibinfo{author}{F.~Monserrat},
  \bibinfo{title}{$\delta$-sequences and evaluation codes defined by plane
  valuations at infinity}, \bibinfo{journal}{Proc. Lond. Math. Soc. (3)}
  \bibinfo{volume}{98} (\bibinfo{year}{2009}) \bibinfo{pages}{714--740}.
  \DOIprefix\doi{10.1112/plms/pdn042}.
\bibitem[{Galindo and Monserrat(2012)}]{galindo3}
\bibinfo{author}{C.~Galindo}, \bibinfo{author}{F.~Monserrat},
  \bibinfo{title}{Evaluation codes defined by finite families of plane
  valuations at infinity}, \bibinfo{journal}{Des. Codes Cryptogr.}
  (\bibinfo{year}{2012}) \bibinfo{pages}{1--25}.
  \DOIprefix\doi{10.1007/s10623-012-9738-7}.
\bibitem[{Galindo and Sanchis(2006)}]{galindo1}
\bibinfo{author}{C.~Galindo}, \bibinfo{author}{M.~Sanchis},
  \bibinfo{title}{Evaluation codes and plane valuations},
  \bibinfo{journal}{Des. Codes Cryptogr.} \bibinfo{volume}{41}
  (\bibinfo{year}{2006}) \bibinfo{pages}{199--219}.
  \DOIprefix\doi{10.1007/s10623-006-9011-z}.
\bibitem[{Geil(2003)}]{normtrace}
\bibinfo{author}{O.~Geil}, \bibinfo{title}{On codes from norm-trace curves},
  \bibinfo{journal}{Finite Fields Appl.} \bibinfo{volume}{9}
  (\bibinfo{year}{2003}) \bibinfo{pages}{351--371}.
  \DOIprefix\doi{10.1016/S1071-5797(03)00010-8}.
\bibitem[{Geil(2008)}]{BookAG}
\bibinfo{author}{O.~Geil}, \bibinfo{title}{Evaluation codes from an affine
  variety code perspective}, in: \bibinfo{editor}{E.~Mart\'inez-Moro},
  \bibinfo{editor}{C.~Munuera}, \bibinfo{editor}{D.~Ruano} (Eds.),
  \bibinfo{booktitle}{Advances in algebraic geometry codes},
  volume~\bibinfo{volume}{5} of \textit{\bibinfo{series}{Coding Theory and
  Cryptology}}, \bibinfo{publisher}{World Scientific},
  \bibinfo{address}{Singapore}, \bibinfo{year}{2008}, pp.
  \bibinfo{pages}{153--180}. \DOIprefix\doi{10.1142/9789812794017_0004}.
\bibitem[{Geil and H{\o}holdt(2001)}]{hyperbolic}
\bibinfo{author}{O.~Geil}, \bibinfo{author}{T.~H{\o}holdt}, \bibinfo{title}{On
  hyperbolic codes}, in: \bibinfo{editor}{S.~Bozta\c{s}}, \bibinfo{editor}{I.E.
  Shparlinski} (Eds.), \bibinfo{booktitle}{Applied Algebra, Algebraic
  Algorithms and Error-Correcting Codes}, volume \bibinfo{volume}{2227} of
  \textit{\bibinfo{series}{Lecture Notes in Computer Science}},
  \bibinfo{publisher}{Springer}, \bibinfo{year}{2001}, pp.
  \bibinfo{pages}{159--171}. \DOIprefix\doi{10.1007/3-540-45624-4_17}.
\bibitem[{Geil et~al.(2012{\natexlab{a}})Geil, Matsumoto and
  Ruano}]{GMRsubmitted}
\bibinfo{author}{O.~Geil}, \bibinfo{author}{R.~Matsumoto},
  \bibinfo{author}{D.~Ruano}, \bibinfo{title}{List decoding algorithm based on
  voting in {G}r\"{o}bner bases for general one-point {AG} codes},
  \bibinfo{year}{2012}{\natexlab{a}}.
  \href{http://arxiv.org/abs/1203.6127v4}{\tt arXiv:1203.6127v4}.
\bibitem[{Geil et~al.(2012{\natexlab{b}})Geil, Matsumoto and Ruano}]{GMRisit}
\bibinfo{author}{O.~Geil}, \bibinfo{author}{R.~Matsumoto},
  \bibinfo{author}{D.~Ruano}, \bibinfo{title}{List decoding algorithms based on
  {G}r{\"o}bner bases for general one-point {AG} codes}, in:
  \bibinfo{booktitle}{Proc.\ 2012 IEEE International Symposium on Information
  Theory}, \bibinfo{address}{Cambridge, MA, USA},
  \bibinfo{year}{2012}{\natexlab{b}}, pp. \bibinfo{pages}{86--90}.
  \DOIprefix\doi{10.1109/ISIT.2012.6284685}.
\bibitem[{Geil et~al.(2011)Geil, Munuera, Ruano and Torres}]{GMRT}
\bibinfo{author}{O.~Geil}, \bibinfo{author}{C.~Munuera},
  \bibinfo{author}{D.~Ruano}, \bibinfo{author}{F.~Torres}, \bibinfo{title}{On
  the order bounds for one-point {AG} codes}, \bibinfo{journal}{Adv. Math.
  Commun.} \bibinfo{volume}{5} (\bibinfo{year}{2011})
  \bibinfo{pages}{489--504}. \DOIprefix\doi{10.3934/amc.2011.5.489}.
\bibitem[{Geil and Pellikaan(2002)}]{GP}
\bibinfo{author}{O.~Geil}, \bibinfo{author}{R.~Pellikaan}, \bibinfo{title}{On
  the structure of order domains}, \bibinfo{journal}{Finite Fields Appl.}
  \bibinfo{volume}{8} (\bibinfo{year}{2002}) \bibinfo{pages}{369--396}.
  \DOIprefix\doi{10.1006/ffta.2001.0347}.
\bibitem[{Geil and Thommesen(2006)}]{geithom}
\bibinfo{author}{O.~Geil}, \bibinfo{author}{C.~Thommesen}, \bibinfo{title}{On
  the {F}eng-{R}ao bound for generalized {H}amming weights}, in:
  \bibinfo{editor}{M.P. Fossorier}, \bibinfo{editor}{H.~Imai},
  \bibinfo{editor}{S.~Lin}, \bibinfo{editor}{A.~Poli} (Eds.),
  \bibinfo{booktitle}{Applied Algebra, Algebraic Algorithms and
  Error-Correcting Codes}, volume \bibinfo{volume}{3857} of
  \textit{\bibinfo{series}{Lecture Notes in Computer Science}},
  \bibinfo{publisher}{Springer}, \bibinfo{year}{2006}, pp.
  \bibinfo{pages}{295--306}. \DOIprefix\doi{10.1007/11617983_29}.
\bibitem[{Geil and Thomsen(2012)}]{weightedreedmuller}
\bibinfo{author}{O.~Geil}, \bibinfo{author}{C.~Thomsen},
  \bibinfo{title}{Weighted {R}eed--{M}uller codes revisited},
  \bibinfo{journal}{Des. Codes Cryptogr.}  (\bibinfo{year}{2012})
  \bibinfo{pages}{1--26}. \DOIprefix\doi{10.1007/s10623-012-9680-8}.
\bibitem[{Heijnen and Pellikaan(1998)}]{heijnenpellikaan}
\bibinfo{author}{P.~Heijnen}, \bibinfo{author}{R.~Pellikaan},
  \bibinfo{title}{Generalized {H}amming weights of {$q$}-ary {R}eed-{M}uller
  codes}, \bibinfo{journal}{IEEE Trans. Inform. Theory} \bibinfo{volume}{44}
  (\bibinfo{year}{1998}) \bibinfo{pages}{181--196}.
  \DOIprefix\doi{10.1109/18.651015}.
\bibitem[{H{\o}holdt et~al.(1998)H{\o}holdt, van Lint and Pellikaan}]{handbook}
\bibinfo{author}{T.~H{\o}holdt}, \bibinfo{author}{J.H. van Lint},
  \bibinfo{author}{R.~Pellikaan}, \bibinfo{title}{Algebraic geometry codes},
  in: \bibinfo{editor}{V.S. Pless}, \bibinfo{editor}{W.C. Huffman} (Eds.),
  \bibinfo{booktitle}{Handbook of Coding Theory}, volume~\bibinfo{volume}{1},
  \bibinfo{publisher}{Elsevier}, \bibinfo{address}{Amsterdam},
  \bibinfo{year}{1998}, pp. \bibinfo{pages}{871--961}.
\bibitem[{Lee(2012)}]{lee}
\bibinfo{author}{K.~Lee}, \bibinfo{title}{Unique decoding of plane {AG} codes
  revisited}, \bibinfo{year}{2012}. \href{http://arxiv.org/abs/1204.0052v2}{\tt
  arXiv:1204.0052v2}.
\bibitem[{Lee et~al.(2012)Lee, Bras-Amor\'os and O'Sullivan}]{LBO}
\bibinfo{author}{K.~Lee}, \bibinfo{author}{M.~Bras-Amor\'os},
  \bibinfo{author}{M.E. O'Sullivan}, \bibinfo{title}{Unique decoding of plane
  {AG} codes via interpolation}, \bibinfo{journal}{IEEE Trans. Inform. Theory}
  \bibinfo{volume}{58} (\bibinfo{year}{2012}) \bibinfo{pages}{3941--3950}.
  \DOIprefix\doi{10.1109/TIT.2012.2182757}.
\bibitem[{Little(2007)}]{little}
\bibinfo{author}{J.B. Little}, \bibinfo{title}{The ubiquity of order domains
  for the construction of error control codes}, \bibinfo{journal}{Adv. Math.
  Commun.} \bibinfo{volume}{1} (\bibinfo{year}{2007})
  \bibinfo{pages}{151--171}. \DOIprefix\doi{10.3934/amc.2007.1.151}.
\bibitem[{Matsumoto and Miura(2000)}]{MM}
\bibinfo{author}{R.~Matsumoto}, \bibinfo{author}{S.~Miura}, \bibinfo{title}{On
  the {F}eng-{R}ao bound for the $\mathcal{L}$-construction of algebraic
  geometry codes}, \bibinfo{journal}{IEICE Trans. Fundamentals}
  \bibinfo{volume}{E83-A} (\bibinfo{year}{2000}) \bibinfo{pages}{926--930}.
  \URLprefix \url{http://www.rmatsumoto.org/repository/e83-a_5_923.pdf}.
\bibitem[{Miura(1997)}]{miura1}
\bibinfo{author}{S.~Miura}, \bibinfo{title}{Study of Error-Correcting Codes
  based on Algebraic Geometry}, Ph.D. thesis, Univ. Tokyo,
  \bibinfo{year}{1997}. \bibinfo{note}{(in Japanese)}.
\bibitem[{Miura(1998)}]{miura2}
\bibinfo{author}{S.~Miura}, \bibinfo{title}{Linear codes on affine algebraic
  varieties}, \bibinfo{journal}{IEICE Trans. Fundamentals (Japanese Edition)}
  \bibinfo{volume}{J81-A} (\bibinfo{year}{1998}) \bibinfo{pages}{1386--1397}.
  \bibinfo{note}{(in Japanese)}.
\bibitem[{O'Sullivan(2001)}]{OSullivan}
\bibinfo{author}{M.E. O'Sullivan}, \bibinfo{title}{New codes for the
  {B}erlekamp-{M}assey-{S}akata algorithm}, \bibinfo{journal}{Finite Fields
  Appl.} \bibinfo{volume}{7} (\bibinfo{year}{2001}) \bibinfo{pages}{293--317}.
  \DOIprefix\doi{10.1006/ffta.2000.0283}.
\bibitem[{Pellikaan(1993)}]{early}
\bibinfo{author}{R.~Pellikaan}, \bibinfo{title}{On the efficient decoding of
  algebraic-geometric codes}, in: \bibinfo{editor}{P.~Camion},
  \bibinfo{editor}{P.~Charpin}, \bibinfo{editor}{S.~Harari} (Eds.),
  \bibinfo{booktitle}{Eurocode '92 International Symposium on Coding Theory and
  Applications}, number \bibinfo{number}{339} in \bibinfo{series}{CISM Courses
  and Lectures}, \bibinfo{organization}{CISM International Centre for
  Mechanical Sciences}, \bibinfo{publisher}{Springer}, \bibinfo{year}{1993},
  pp. \bibinfo{pages}{231--253}. \URLprefix
  \url{http://www.win.tue.nl/~ruudp/paper/17.pdf}.
\bibitem[{Shibuya and Sakaniwa(2001)}]{shibuya}
\bibinfo{author}{T.~Shibuya}, \bibinfo{author}{K.~Sakaniwa}, \bibinfo{title}{A
  dual of well-behaving type designed minimum distance},
  \bibinfo{journal}{IEICE Trans. Fundamentals} \bibinfo{volume}{E84-A}
  (\bibinfo{year}{2001}) \bibinfo{pages}{647--652}. \URLprefix
  \url{http://www.ts-lab.net/members/tshibuya/pdf/e84-a_2_647.pdf}.

\end{thebibliography}

\end{document}